\documentclass[aps,amsmath,onecolumn,amssymb,notitlepage]{revtex4-1}
\usepackage{amssymb}
\usepackage{amsthm}
\usepackage{amsfonts}
\usepackage{complexity}
\usepackage{graphicx}
\usepackage{dcolumn}
\usepackage{bm}
\usepackage{hyperref}
\usepackage{enumerate}
\usepackage{algorithm}
\usepackage{algpseudocode}

\usepackage{multirow}

\newtheorem{theorem}{Theorem}
\newtheorem{lemma}{Lemma}
\newtheorem{definition}{Definition}
\newtheorem{corollary}{Corollary}


\usepackage[usenames,dvipsnames]{xcolor}
\hypersetup{
    colorlinks=true,       
    linkcolor=Maroon,          
    citecolor=OliveGreen,        
    filecolor=magenta,      
    urlcolor=Blue           
}


\begin{document}

\def\ket#1{\left|#1\right\rangle}
\def\bra#1{\langle#1|}
\newcommand{\ketbra}[2]{|#1\rangle\!\langle#2|}
\newcommand{\braket}[2]{\langle#1|#2\rangle}
\newcommand{\prob}[1]{{\rm Pr}\left(#1 \right)}
\newcommand{\expect}[2]{{\mathbb{E}_{#2}}\!\left\{#1 \right\}}
\newcommand{\var}[2]{{\mathbb{V}_{#2}}\!\left\{#1 \right\}}

\newcommand{\sinc}{\operatorname{sinc}}


\newcommand{\sde}{\mathrm{sde}}
\newcommand{\Z}{\mathbb{Z}}
\newcommand{\RR}{\mathbb{R}}
\newcommand{\w}{\omega}
\newcommand{\Kap}{\kappa}

\newcommand{\Tchar}{$T$}
\newcommand{\T}{\Tchar~}
\newcommand{\TT}{\mathrm{T}}
\newcommand{\ClT}{\{{\rm Clifford}, \Tchar\}~}
\newcommand{\Tcount}{\Tchar--count~}
\newcommand{\Tcountper}{\Tchar--count}
\newcommand{\Tcounts}{\Tchar--counts~}
\newcommand{\Tdepth}{\Tchar--depth~}
\newcommand{\Zr}{\Z[i,1/\sqrt{2}]}
\newcommand{\ve}{\varepsilon}

\newcommand{\eq}[1]{\hyperref[eq:#1]{(\ref*{eq:#1})}}
\renewcommand{\sec}[1]{\hyperref[sec:#1]{Section~\ref*{sec:#1}}}
\newcommand{\app}[1]{\hyperref[app:#1]{Appendix~\ref*{app:#1}}}
\newcommand{\fig}[1]{\hyperref[fig:#1]{Figure~\ref*{fig:#1}}}
\newcommand{\thm}[1]{\hyperref[thm:#1]{Theorem~\ref*{thm:#1}}}
\newcommand{\lem}[1]{\hyperref[lem:#1]{Lemma~\ref*{lem:#1}}}
\newcommand{\tab}[1]{\hyperref[tab:#1]{Table~\ref*{tab:#1}}}
\newcommand{\cor}[1]{\hyperref[cor:#1]{Corollary~\ref*{cor:#1}}}
\newcommand{\alg}[1]{\hyperref[alg:#1]{Algorithm~\ref*{alg:#1}}}
\newcommand{\defn}[1]{\hyperref[def:#1]{Definition~\ref*{def:#1}}}

\newcommand{\targfix}{\qw {\xy {<0em,0em> \ar @{ - } +<.39em,0em>
\ar @{ - } -<.39em,0em> \ar @{ - } +
<0em,.39em> \ar @{ - }
-<0em,.39em>},<0em,0em>*{\rule{.01em}{.01em}}*+<.8em>\frm{o}
\endxy}}

\newenvironment{proofof}[1]{\begin{trivlist}\item[]{\flushleft\it
Proof of~#1.}}
{\qed\end{trivlist}}

\newcommand{\cu}[1]{{\textcolor{red}{#1}}}
\newcommand{\tout}[1]{{}}
\newcommand{\good}{{\rm good}}
\newcommand{\bad}{{\rm bad}}

\newcommand{\id}{\openone}

\title{Can small quantum systems learn?}
\author{Nathan Wiebe$^\dagger$}
\author{Christopher Granade$^{*+}$}
\affiliation{$^\dagger$Quantum Architectures and Computation Group, Microsoft Research, Redmond, WA (USA)}
\affiliation{
    $^*$Centre for Engineered Quantum Systems,
    University of Sydney,
    Sydney, NSW, Australia
}
\affiliation{
    $^+$School of Physics,
    University of Sydney,
    Sydney, NSW, Australia
}
\begin{abstract}
We examine the question of whether quantum mechanics places limitations on the ability of small quantum devices to learn.  We specifically examine the question in the context of Bayesian inference, wherein the prior and posterior distributions are encoded in the quantum state vector.  We conclude based on lower bounds from Grover's search that an efficient blackbox method for updating the distribution is impossible.  We then address this by providing a new adaptive form of approximate quantum Bayesian inference that is polynomially faster than its classical anolog and tractable if the quantum system is augmented with classical memory or if the low--order moments of the distribution are protected using a repetition code.  This work suggests that there may be a connection between fault tolerance and the capacity of a quantum system to learn from its surroundings.
\end{abstract}
\maketitle
\section{Introduction}


Quantum systems have, at first glance, an incredible capacity to store vectors.  Only $n$ qubits suffice to represent a vector in $\mathbb{R}^{2^n}$.  This, in part, is the origin of all of the celebrated exponential speedups that quantum computing offers~\cite{shor1994algorithms,lloyd1996universal,childs2003exponential,HHL09}.
At the same time, from a computational learning perspective, quantum states can be approximately described
much more efficiently \cite{aaronson_learnability_2007}. 
Indeed, the tension between this ability of a small quantum register to store a very high--dimensional vector and the perspective offered by learning raises an important question: can a quantum agent with a logarithmically sized quantum memory efficiently learn from its surroundings?
Here, we seek to shed light on this question by investigating Bayesian inference as a way to model the quantum agent inferring properties about its surroundings.  We will see that lower bounds on quantum query complexity of unstructured search place severe limitations on the ability of the system to learn exactly.  Nonetheless, we will also see that quantum mechanics reveals new possibilities for approximate learning that are not present in classical Bayesian inference.

Bayes' rule is the heart of Bayesian inference.  It gives the correct way to update a prior distribution that describes the users' initial beliefs about a system model when a piece of experimental evidence is received.  If $E$ is a piece of evidence (an observable variable) and $x$ denotes a candidate model for the experimental system (a latent or hidden variable), then Bayes' rule states that the probability that the model is valid given the evidence (denoted $P(x|E)$) is 
\begin{equation}
P(x|E)= \frac{P(E|x) P(x)}{P(E)}= \frac{P(E|x) P(x)}{\langle P(E|x), P(x) \rangle},\label{eq:Bayes}
\end{equation}
where $P(E|x)$ is known as the likelihood function and is assumed to either be either numerically or inferred emperically.
This form of learning has a number of advantages for applications in quantum information processing \cite{granade_robust_2012,ferrie_likelihood-free_2014}.  Firstly, it is highly robust to noise and experimental imperfections.  Secondly, it is broadly applicable to almost every data processing problem.  Finally, Bayesian inference provides a probability distribution rather than a point estimate of the model. This allows the uncertainty in the inferred parameter to be directly computed.

There are several features that can make Bayesian inference computationally expensive, especially for scientific applications.  Perhaps the biggest contributor to the cost of these algorithms is the dimensionality of the model space.  Typically the latent variable $x$ is parameterized by a vector in $\mathbb{R}^d$, which means that precisely performing an update requires integrating over an infinite number of hypotheses.  Such integrals are often intractable, which limits the applicability of exact Bayesian inference.  

 A natural question to ask at this point is whether quantum computing could make inference tractable.  Quantum advantages are seen for a wealth of other machine learning protocols~\cite{aimeur2006machine,wiebe_quantum_2014,wiebe2014quantum,lloyd2013quantum,lloyd2014quantum,boixo2015fast}, so it stands to reason that it may be able to provide advantages here as well.  This issue is has been recently discussed in~\cite{LYC14}, which uses ideas from quantum rejection sampling~\cite{HHL09,ORR13} to accelerate the inference process.  Their work leaves a number of important issues open.  The method has success probability that shrinks exponentially with the number of updates attempted.  Furthermore, the algorithm cannot be applied in an online fashion nor can it be applied to continuous problems or those with stochastically varying latent variables.  We address these issues here by providing a quantum algorithm that can implement Bayesian inference in an online fashion by periodically classically caching a model of the posterior distribution.  This approach allows the quantum agent to revert to a previous model in the event that the update process fails, which allows the process to be implemented efficiently under reasonable assumptions on the likelihood function.


\section{Quantum Bayesian updating}
In order to investigate the question of whether small quantum systems can learn efficiently, we will examine the issue through the lens of Bayesian inference.
Our first objective in this section is to provide a concrete definition for what we mean by a quantum Bayesian update and show a protocol for implementing a quantum Bayesian update.  We will then show that this method cannot generically be efficient and furthermore that asymptotic improvements to the algorithm or an inexpensive error correction algorithm for its faults would violate lower bounds on Grover's search.  These results motivate our definition of ``semi--classical'' Bayesian updating in the subsequent section.

 A key assumption that we make here and in the subsequent text is that the visible and latent variables are discrete.  In other words, we assume that each experiment has a discrete set of outcomes and there are a discrete set of hypotheses that could explain the data.    Continuous problems can, however, be approximated by discrete models and we provide error bounds for doing so in~\app{discrete}.  We then invoke these assumptions in the following definition of a quantum Bayesian update.  

\begin{definition}\label{def:qbayes}
A quantum Bayesian update of a prior state $\sum_x \sqrt{P(x)} \ket{x}$ performs, for observable variable $E$ and likelihood function $P(E|x)$, the map 
$$\sum_x \sqrt{P(x)} \ket{x}\mapsto  \sum_x \sqrt{P(x|E)} \ket{x}. $$
\end{definition}

In order to formalize this notion of quantum Bayesian updating within an oracular setting we will further make a pair of assumptions.
\begin{enumerate}
\item There exists a self-inverse quantum oracle, $O_E$, that computes the likelihood function as a bit string in a quantum register: $O_E\ket{x}\ket{y} = \ket{x}\ket{y\oplus P(E|x)}$.
\item A constant $\Gamma_E$ is known such that $P(E|x) \le \Gamma_E \le 1$ for all $x$.
\end{enumerate}

An interesting consequence of the above assumptions and~\defn{qbayes} is that in general the Bayesian update is non-unitary when working on this space.  This means that we cannot  implement a quantum Bayesian update deterministically without dilating the space.
The following lemma discusses how to implement such a non--deterministic quantum Bayesian update.  It can also be thought of as a generalization of the result of~\cite{LYC14} to an oracular setting.

\begin{lemma}
Given an initial state $\sum_x \sqrt{P(x)} \ket{x}$ and $\Gamma_E : P(E|x) \le \Gamma_E$ the state $\sum_x \sqrt{P(x|E)} \ket{x}$ can be prepared using an expected number of queries to $O_E$ that is in $O(\sqrt{\Gamma_E/\langle P(x), P(E|x)\rangle})$.\label{lem:rejection}
\end{lemma}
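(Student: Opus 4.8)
The plan is to implement the non‑unitary Bayesian update as a postselected amplitude‑rotation step, and then boost the success probability with amplitude amplification. First I would adjoin an ancilla qubit initialized to $\ket{0}$ and, using one query to $O_E$ into a scratch register together with a controlled rotation, prepare the state
\begin{equation}
\sum_x \sqrt{P(x)} \ket{x}\left(\sqrt{\frac{P(E|x)}{\Gamma_E}}\ket{0} + \sqrt{1-\frac{P(E|x)}{\Gamma_E}}\ket{1}\right),
\end{equation}
after uncomputing the scratch register with a second query to $O_E$ (this is where self‑inverseness of $O_E$ is used). The rotation angle is a classically computable function of the bit string $P(E|x)$ and of the known constant $\Gamma_E$, so this costs $O(1)$ queries. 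Measuring the ancilla and obtaining outcome $0$ collapses the remaining register, up to normalization, to $\sum_x \sqrt{P(E|x)P(x)} \ket{x} \propto \sum_x \sqrt{P(x|E)}\ket{x}$, which is exactly the target state of \defn{qbayes}; here I would invoke \eq{Bayes} to identify the normalization with $\langle P(x), P(E|x)\rangle$.

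The probability of the good outcome $0$ is $p_{\good} = \sum_x P(x) P(E|x)/\Gamma_E = \langle P(x), P(E|x)\rangle/\Gamma_E$. Naively repeating the preparation‑and‑measure step until success costs an expected $O(1/p_{\good}) = O(\Gamma_E/\langle P(x), P(E|x)\rangle)$ queries, but we can do quadratically better. Since the circuit above is a unitary $A$ (ancilla and scratch included) with $A\ket{0}\cdots\ket{0}$ having overlap $\sqrt{p_{\good}}$ with the ``ancilla $=0$'' subspace, amplitude amplification produces the projected (normalized) state using $O(1/\sqrt{p_{\good}})$ invocations of $A$ and $A^{-1}$, hence $O(\sqrt{\Gamma_E/\langle P(x), P(E|x)\rangle})$ queries to $O_E$. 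The one subtlety is that $p_{\good}$ is not known in advance, so I would use the standard ``exponential search'' variant of amplitude amplification (Brassard–Høyer–Mosca–Tapp): run the algorithm with successively doubling guesses for the number of Grover iterations and check success, which preserves the $O(1/\sqrt{p_{\good}})$ expected query count up to a constant factor.

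The main obstacle I anticipate is bookkeeping rather than conceptual: one must verify that the scratch register used to hold $P(E|x)$ is cleanly disentangled before the ancilla measurement (so that the post‑measurement state is genuinely $\sum_x\sqrt{P(x|E)}\ket{x}$ with no residual entanglement), and that the controlled rotation can be realized to the stated query cost — finite‑precision representation of $P(E|x)$ only affects the fidelity of the output and the exact value of $p_{\good}$, not the asymptotic query complexity, so I would note this and defer precision analysis to the discretization discussion in \app{discrete}. A secondary point is to confirm that the amplitude‑amplification reflections about $A\ket{0}\cdots\ket{0}$ and about the good subspace require no additional queries to $O_E$ beyond those already inside $A$ and $A^{-1}$, which is immediate since both reflections act only on known registers.
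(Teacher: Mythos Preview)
Your proposal is correct and follows essentially the same route as the paper: compute the likelihood into a register, perform a controlled rotation to load $\sqrt{P(E|x)/\Gamma_E}$ onto an ancilla, postselect, and then invoke amplitude amplification to achieve the $O(\sqrt{\Gamma_E/\langle P(x),P(E|x)\rangle})$ expected query count. The only differences are cosmetic (you flag outcome $0$ rather than $1$) or matters of added rigor---you explicitly uncompute the scratch register and invoke the exponential-search variant of amplitude amplification for unknown $p_{\good}$, whereas the paper leaves the likelihood register in place and simply cites amplitude amplification without addressing the unknown-probability issue.
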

\begin{proof}
Using a single call to $O_E$ and adding a sufficient number of ancilla qubits, we can transform the state $\sum_x \sqrt{P(x)} \ket{x}$ into
\begin{equation}
\sum_x \sqrt{P(x)} \ket{x} \ket{P(E|x)}\ket{0}.
\end{equation}
Then by applying the rotation $R_y(2 \sin^{-1}(P(E|x)/\Gamma_E))$ to the ancilla qubit, controlled on the register representing $\ket{P(E|x)}$, we can enact
\begin{equation}
\sum_x \sqrt{P(x)} \ket{x} \ket{P(E|x)}\ket{0}\mapsto \sum_x \sqrt{P(x)} \ket{x} \ket{P(E|x)}\left(\sqrt{\frac{P(E|x)}{\Gamma_E}}\ket{1} + \sqrt{1-\frac{P(E|x)}{\Gamma_E}}\ket{0} \right).\label{eq:rejection}
\end{equation}
Next the right most qubit register is measured and if a result of $1$ is obtained then the resultant state is
\begin{equation}
\frac{\sum_{x} \sqrt{P(x) P(E|x)}\ket{x}}{\sqrt{\sum_x P(x)P(E|x)}},
\end{equation}
which gives a probability distribution that corresponds to that expected by Bayes' rule.  The probability of this occurring is $\sum_x P(x) P(E|x) /\Gamma_E = \langle P(x), P(E|x) \rangle / \Gamma_E$.

Since the process is heralded, amplitude amplification can be used to boost the probability of success for the successful branch quadratically~\cite{BHM+02}.  Thus the average number of queries is in $O(\sqrt{\Gamma_E/\langle P(x), P(E|x)\rangle})$ as claimed.
\end{proof}

If the Bayesian algorithm is used solely to post--process information then one update will suffice to give the posterior distribution.  In online settings many updates will be needed to reach the final posterior distribution.  If $L$ updates are required then the probability of all such updates succeeding given a sequence of observed variables $E_1,\ldots,E_L$ is at most
\begin{equation}
P_{\rm succ} \le \sum_x P(x) \left(\max_E \frac{P(E|x)}{\Gamma_E}\right)^L\le \sqrt{\sum_x P^2(x)\sum_x \left(\max_E \frac{P(E|x)}{\Gamma_E}\right)^{2L}}\le\sqrt{\sum_x \left(\max_E \frac{P(E|x)}{\Gamma_E}\right)^{2L}}.
\end{equation}
This shows that the probability of success generically will shrink exponentially with $L$.

The exponential decay of the success probability with $L$ can be mitigated to some extent through amplitude amplification on the condition that all $L$ updates are successful  This reduces the expected number of updates needed to 
\begin{equation}
O\left(\left({\sum_x P(x) \left(\min_E \frac{P(E|x)}{\Gamma_E}\right)^L}\right)^{-1/2}\right),
\end{equation}
but this strategy is obviously insufficient to rid the method of its exponentially shrinking success probability.  Furthermore, we will see that there are fundamental limitations to our ability to avoid or correct such failures.
%


While the success probability in general falls exponentially, not all failures are catastrophic.  We show in~\app{stability} that a radius of convergence exists such that if the prior distribution is sufficiently close to a delta--function about the true value of the latent variable, and the likelihood function is well behaved, then this updating strategy will cause it to converge to the delta--function.  This convergence occurs even if the user ignores the fact that inference errors can occur and does not attempt to correct such errors.  This means that, for discrete inference problems, the computational complexity of inferring the correct latent parameter need not be infinite.

The reason why the errors in quantum rejection sampling cannot, in general, be efficiently corrected stems from the fact that quantum Bayesian inference algorithm described in~\lem{rejection} can be thought of as a generalization of Grover's algorithm~\cite{Gro96}.  Grover's problem (with one marked element) seeks to find $x = {\rm argmax}( f(x))$ where $f(x)$ is a blackbox Boolean function that is promised to have a unique $x_{\rm mark}$ such that $f(x_{\rm mark})=1$.  The generalization to multiple marked elements is similar.  The reduction between the two problems is formally proved below.

\begin{lemma}\label{lem:reduce}
Grover's problem with $N$ items and $m$ marked items reduces to Bayesian inference on a prior on $\mathbb{R}^N$.
\end{lemma}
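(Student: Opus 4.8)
The plan is to exhibit a single Bayesian inference instance whose solution immediately reveals a marked element and whose query cost coincides with that of Grover search. Identify the hypothesis space with $\{0,\dots,N-1\}$ and take the uniform prior $P(x)=1/N$, i.e.\ the vector $(1/N,\dots,1/N)\in\mathbb{R}^N$, whose amplitude encoding $N^{-1/2}\sum_x\ket{x}$ is preparable by Hadamards with no oracle calls. Given the Grover function $f$ with $m$ marked elements, introduce one binary observable $E$ and set the likelihood to $P(E\mid x)=f(x)$. Then the likelihood oracle $O_E\ket{x}\ket{y}=\ket{x}\ket{y\oplus P(E\mid x)}$ is exactly the (self-inverse, XOR-type) Grover oracle, and $\Gamma_E=1$ is a valid bound, so assumptions~1--2 above are met.

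Next I would compute the posterior explicitly: $P(E)=\sum_x P(x)P(E\mid x)=m/N$, hence $P(x\mid E)=f(x)/m$, the uniform distribution on the marked set. A quantum Bayesian update in the sense of \defn{qbayes} therefore maps the prior state to $m^{-1/2}\sum_{x:f(x)=1}\ket{x}$, and a computational-basis measurement of this state returns a uniformly random marked $x$, which one further query to $O_E$ verifies. Thus any procedure realizing the update solves Grover's problem with $O(1)$ extra queries. In the other direction \lem{rejection} shows the update needs only $O(\sqrt{\Gamma_E/\langle P(x),P(E\mid x)\rangle})=O(\sqrt{N/m})$ queries, so the reduction is tight: it transfers the known $\Omega(\sqrt{N/m})$ query lower bound for unstructured search to quantum Bayesian updating, proving that the cost in \lem{rejection} cannot be improved and that no cheap generic correction of its failures can exist.

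The point that needs care is that \defn{qbayes} only requires the posterior \emph{state} to be produced, whereas the natural implementation in \lem{rejection} is heralded and non-unitary rather than a clean unitary; an approximate preparation should also be allowed. I would therefore argue at the level of query complexity rather than circuit structure: if some algorithm prepares (a good approximation to) the posterior, deterministically or with constant heralded success probability using an expected $T$ queries to $O_E$, then running it, measuring, and checking the output yields a marked element after $O(T)$ expected queries, so $T=\Omega(\sqrt{N/m})$ regardless of the internal mechanism, and a small measurement error only costs a constant-factor repetition. A minor robustness remark: one may instead use the smoothed likelihood $P(E\mid x)=\delta+(1-\delta)f(x)$, which keeps $\langle P(x),P(E\mid x)\rangle=\Theta(m/N)$ once $\delta=O(m/N)$ and avoids exactly-vanishing likelihoods; and since neither the prior nor $O_E$ depends on $m$, the special case $m=1$ (and the case of unknown $m$) follows with no change.
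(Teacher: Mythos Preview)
Your proposal is correct and follows essentially the same route as the paper: take the uniform prior on $\{0,\dots,N-1\}$, set the likelihood $P(E\mid x)$ equal to the marked-element indicator $f(x)$ (so that the likelihood oracle coincides with the Grover oracle and $\Gamma_E=1$), and observe that the resulting posterior is uniform on the marked set, from which a marked element is read off. The additional material you include on query-complexity lower bounds, the tightness via \lem{rejection}, and the smoothed likelihood goes beyond what this lemma asserts and in fact anticipates the content of the paper's subsequent theorem; it is correct but not needed for \lem{reduce} itself.
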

\begin{proof}
Let $O_G$ be a Boolean function that takes the value $1$ iff $x\in X_m$ where $|X_m|=m$.  Identifying the set $X_m$ by querying this function is equivalent to Grover's problem.  Consider the following likelihood function on a two--outcome space where $1$ corresponds to finding a marked state and $0$ corresponds to finding an un--marked state:
\begin{equation}
P(1|x) = \begin{cases} 1 & {\rm if~}x\in X_m\\ 0& {\rm otherwise} \end{cases}.\label{eq:reduceLikelihood}
\end{equation}
We also have that $P(0|x)=1-P(1|x)$, but this fact is not needed for the proof.

It is then easy to see that $P(1|x)=O_G(x)$ and thus a likelihood evaluation is equivalent to a query to $O_G$.  This means that we can solve Grover's problem using the following algorithm.  \begin{enumerate}
\item Set the prior to be $P(x) = 1/N$.
\item Set $E=1$, which corresponds to pretending that an experiment was performed that found a marked entry.  
\item Compute $P(x|E) \propto P(E|x)P(x)$.  
\item Output all $x$ such that $P(x|E)=1/m$.
\end{enumerate}

The validity of this algorithm is easy to verify from~\eq{Bayes} and it is clear that the posterior distribution is a uniform distribution over $x\in X_m$.  Since $|X_m|=m$ and $P(x)$ is uniform, all elements in the support of the posterior distribution have probability $1/m$ and thus Grover's problem can be solved using Bayesian inference.  This algorithm will succeed classically using $N$ queries to $O_G$, rather than the $N-m$ queries required in the worst case scenario if a Bayesian framework is not adopted.

\end{proof}

This reduction of Grover's problem to Bayesian inference brings with it tight lower bounds on the query complexity of solving the problem~\cite{BGH+96}.  We can exploit these bounds to show limitations on quantum systems ability to perform Bayesian inference and correct erroneous measurements that occur in the application of the method of~\lem{rejection}.  The following theorem states two such restrictions, which show that the method of~\lem{rejection} cannot be trivially improved nor can its failures be inexpensively corrected. 
\begin{theorem}
Let $F$ be a blackbox quantum update algorithm that performs a quantum Bayesian update on an arbitrary pure state $\sum_{x=1}^N \alpha_x \ket{x}$ using $O(1)$ queries to the oracle $O_E$ and is heralded and has success probability $\Omega({\sum_x |\alpha_x|^2 P(E|x)/\Gamma_E})$ for $\max_x P(E|x)\le \Gamma_E\le 1$.  The following are impossible.
\begin{enumerate}
\item A blackbox algorithm $G$ capable of performing a quantum Bayesian update of an arbitrary pure quantum state of the form $\sum_{x=1}^N \alpha_x \ket{x}$ that uses $O(1)$ queries to $O_E$ that is heralded and has  success probability $\omega({\sum_x |\alpha_x|^2 P(E|x)/\Gamma_E})$.
\item A blackbox algorithm $H$ that requires $o(\sqrt{N}/\log(N))$ queries to $O_E$ to undo the effects that $F$ applies to the arbitrary state $\sum_{x=1}^N \alpha_x \ket{x}$ upon a failed update.
\item A blackbox algorithm $I$ capable of performing a quantum Bayesian update of an arbitrary pure quantum state of the form $\sum_{x=1}^N \alpha_x \ket{x}$ that, for all $O_E$, uses on average $o(\sqrt{N})$ queries to $O_E$.
\end{enumerate}
\end{theorem}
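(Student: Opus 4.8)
The plan is to obtain all three statements from the reduction of \lem{reduce} together with the tight $\Omega(\sqrt{N})$ lower bound on the quantum query complexity of unstructured search~\cite{BGH+96}. Throughout I work with the Grover instance of \lem{reduce} with a single marked element: uniform prior $P(x)=1/N$, likelihood $P(1|x)=1$ if $x=x_*$ and $0$ otherwise, and $\Gamma_E=\max_x P(1|x)=1$. For this instance the posterior is the point mass $\ket{x_*}$, so a quantum Bayesian update followed by a computational-basis measurement identifies the marked element with certainty, and the rejection probability in \lem{rejection} is $\sum_x|\alpha_x|^2 P(1|x)/\Gamma_E=|\alpha_{x_*}|^2$, which for the uniform prior is $1/N$. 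Hence $F$ itself, amplified, is already an optimal $\Theta(\sqrt{N})$-query Grover algorithm; each of the three parts asserts that the hypothesised improvement would beat the lower bound.

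For part~1 I would apply amplitude amplification~\cite{BHM+02} to $G$ in the variant that tolerates an unknown amplitude. Since $G$ is heralded and makes $O(1)$ oracle calls, one amplified run costs $O(1/\sqrt{p_G})$ queries in expectation and, on the heralded branch, outputs the exact posterior. On the Grover instance $p_G=\omega(1/N)$, so this is $o(\sqrt{N})$ expected queries to produce $\ket{x_*}$; truncating the run at a constant multiple of its expected length via Markov's inequality gives a bounded-error $o(\sqrt{N})$-query search algorithm, contradicting~\cite{BGH+96}. Part~3 is the same idea without amplification: $I$ prepares $\ket{x_*}$ using $o(\sqrt{N})$ queries on average for \emph{every} oracle, hence also on average over a uniformly random marked location; Markov truncation again yields a bounded-error $o(\sqrt{N})$-query search algorithm, which is impossible even in the average-case sense.

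Part~2 is the delicate one and I expect it to be the main obstacle. The intended mechanism is that a cheap way to reverse the disturbance left by a heralded failure of $F$ turns $F$ into an effectively deterministic preparation of the posterior: run $F$; on failure apply $H$ to restore the input and repeat, or, coherently, wrap the block ``$F$ then $H$-conditioned-on-failure'' in amplitude amplification. Carried out on the Grover instance this solves unstructured search, with the elementary block now costing $O(1)+q$ queries, where $q$ is the query cost of $H$. The points that need real care are: (i) fixing the precise sense in which $H$ must ``undo'' the failure branch of an \emph{arbitrary} input state, and checking that the coherent ``$F$ then conditional $H$'' block meets the hypotheses of amplitude amplification --- in particular that its failure branch is returned to a fixed reference that does not depend on the unknown input, so the amplification reflections cost only $O(q)$ queries; and (ii) accounting for the $O(\log N)$ overhead incurred in turning the amplified state into a definite answer and in passing between expected- and worst-case query counts, which is what is supposed to make the forbidden regime $q=o(\sqrt{N}/\log N)$ rather than $q=o(\sqrt{N})$. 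I would regard step~(i) --- pinning down the exact statement of ``undo'' for which the reduction actually goes through --- as the crux of the whole theorem.
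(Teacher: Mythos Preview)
Your arguments for Parts~1 and~3 are essentially the paper's: reduce to the single-marked-element Grover instance with the hard likelihood $P(1|x)=[x=x_*]$, note that one successful update yields $\ket{x_*}$, and contradict the $\Omega(\sqrt N)$ lower bound via amplitude amplification (Part~1) or directly (Part~3). Part~2, however, does not go through with this likelihood. On the uniform prior $F$ succeeds with probability only $\Theta(1/N)$, so your repeat-until-success loop needs $\Theta(N)$ expected trials and costs $\Theta(N(1+q))$ queries; wrapping the block in amplitude amplification still needs $\Theta(\sqrt N)$ iterations, each of which makes at least the $O(1)$ oracle calls of $F$ itself, so the total is $\Omega(\sqrt N)$ \emph{regardless of how cheap $H$ is}. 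You never get below $\sqrt N$ and hence never reach a contradiction.

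The paper fixes this by switching to a \emph{soft} likelihood, $P(1|x)=2/3$ if $x=x_m$ and $1/3$ otherwise, still computable from one Grover query. A single successful update now only roughly doubles the posterior mass on $x_m$, so $O(\log N)$ sequential successful updates are needed to reach $\Theta(1)$ --- but since $P(1|x)\ge 1/3$ for every $x$, we have $\sum_x|\alpha_x|^2P(1|x)/\Gamma_E\in\Omega(1)$ throughout, so $F$ succeeds with \emph{constant} probability at every stage. Each successful update then costs an expected $O(1)$ attempts times $O(1)+q$ queries per attempt, i.e.\ $O(1+q)$; the whole procedure costs $O((1+q)\log N)$, which is $o(\sqrt N)$ exactly when $q=o(\sqrt N/\log N)$. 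That is where the logarithm actually comes from --- the $O(\log N)$ sequential updates --- not from an expected-to-worst-case conversion as you speculated. The definitional worry you flagged as the crux is a side issue the paper does not dwell on; the missing idea is the choice of likelihood.
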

\begin{proof}
Seeking a contradiction, assume that there exists a quantum algorithm that can perform a quantum Bayesian update using $O(1)$ queries that succeeds with probability $\omega({\sum_x |\alpha_x|^2 P(E|x)})$ for any likelihood function $P(E|x)$.  Next, let us choose the likelihood function to be that used in~\eq{reduceLikelihood} in the reduction proof of~\lem{reduce} and take $\alpha_x =1/\sqrt{N}~\forall~x$.  It is clear from~\lem{rejection} that $\Gamma_E =1$ must be chosen for this problem.  Then by assumption the state $\ket{x_m}$, where $x_m \in X_m$, can be found with probability $\omega({\sum_x |\alpha_x|^2 P(E|x)})\in \omega(1/{N})$.  Since each application of the algorithm requires $O(1)$ queries and success is heralded, $o(\sqrt{N})$ queries are needed on average to learn $x_m$ using amplitude amplification~\cite{BHM+02}, which violates lower bounds for the average query complexity for Grover's problem~\cite{BGH+96}.  Therefore algorithm $G$, which is described in 1, is impossible.

Again seeking a contradiction, consider the following likelihood function with outcomes $\{1,0\}$,
\begin{equation}
P(1|x) = \begin{cases}
2/3, & x=x_m \\
1/3, & x\ne x_m
\end{cases},~P({0}|x) = 1-P(1|x)\label{eq:bayesgrover}.
\end{equation}
For each $x$, $P(1|x)$ can be computed using a single query to $O_E$ and vice versa, thus a query to this likelihood function is equivalent to a call to $O_E$.
Thus~\eq{Bayes} gives that the posterior probability after measuring $1$ is
\begin{equation}
P(x_m|1)=\frac{2P(x_m)}{1 + P(x_m)}=2P(x_m) + O(P(x_m)^2).
\end{equation}
Therefore $O(\log(N))$ measurements of $1$ suffice to amplify the probability from $1/N$ to $\Theta(1)$.  

Similarly, $P(x_m|1) \ge P(x_m)$ unless $1+P(x_m) > 2$ or $P(x_m)<0$.  Since $P(x_m)$ is a probability this is impossible.  Therefore the posterior probability is monotonically increasing with the number of successful updates.  Thus if we define $\alpha^{(1)}_x:= \alpha_x$ and $\alpha_x^{(m)}$ to be the components of the quantum state after $m+1$ quantum updates then $\sqrt{|\alpha_x^{(m)}|^2 P(1|x)}$ is a  monotonically increasing function of $m$.

In practice, it would be unlikely that $O(\log(N))$ sequential measurements would all yield $1$ (i.e. give noisy information about the marked state), but the user of a quantum Bayesian updating algorithm can always pretend that this sequence of observations was obtained (similar to~\lem{reduce}) in order to simulate the search. 
Given the observable variables follow this sequence, \lem{rejection} shows that 
there exists a quantum algorithm that can perform each such update with probability of success
\begin{equation}
\sqrt{\sum_x |\alpha_x|^2 P(1|x)} \in \Omega(1),\label{eq:Pfail}
\end{equation}
since $P(1|x) \ge 1/3$ and $\sum_x |\alpha_x|^2 =1$.

If we were not able to correct errors then~\eq{Pfail} shows that the probability of successfully inferring the marked state is $O({\rm poly}(1/N))$ since $O(\log(N))$ updates are needed; however, by assumption each failure can be corrected using $o(\sqrt{N}/\log(N))$ queries.  Therefore by attempting quantum Bayesian updates, correcting any errors that might occur and repeating until success, a successful update can be
obtained with an average number of queries that is in
\begin{equation}
o\left( \frac{\sqrt{N}}{\log(N)\sqrt{\sum_x |\alpha_x|^2 P(1|x)}}\right)\in o\left(\frac{\sqrt{N}}{\log(N)} \right),
\end{equation}
because $\sqrt{|\alpha_x^{(m)}|^2 P(1|x)}\ge \sqrt{|\alpha_x|^2 P(1|x)}$ for any $m\ge 1$.
Since $O(\log(N))$ successful quantum updates are made in the inference process, the marked state can be inferred within probability $p\in \Theta(1)$ using $o(\sqrt{N})$ queries to the likelihood function.  A
to the likelihood function is equivalent to a query to Grover's oracle and thus error correction method $H$ (described in 2) is impossible.

Finally, the impossibility of method $I$ directly follows from~\lem{reduce} and lower bounds on Grover's search.
\end{proof}

These impossibility results show that the quantum updating procedure of~\cite{LYC14} and~\lem{rejection} cannot be improved  without making assumptions about the underlying prior distributions or likelihood functions.  From this we conclude that quantum Bayesian updating, as per~\defn{qbayes}, is inneficient in general.  This means that small quantum systems that attempt to store the prior and posterior vectors as a quantum state vector cannot do so efficiently, let alone output salient properties of the state, without making such assumptions.

This inefficiency is perhaps unsurprising as exact Bayesian inference is also classically inefficient.  In particular, an efficient sampling algorithm from a distribution that is a close approximation to the  posterior distribution would imply $\P=\NP$~\cite{dagum1993approximating}.  A quantum algorithm capable of efficient Bayesian inference for general models would similarly imply that $\NP \subseteq \BQP$, which is false under reasonable complexity theoretic conjectures.

Although it may not be surprising that quantum Bayesian updating is not generically efficient, it is perhaps surprising that both it and classical updating fail to be efficient for different reasons.  Classical Bayesian updating fails to be efficient because it needs to store prior and posterior probabilities for an exponentially large number of hypotheses; however, its cost scales linearly with the number of updates used.  In contrast, quantum Bayesian updating scales polynomially with the number of hypotheses considered but scales exponentially with the number of updates.  This invites the question of whether it is possibile to combine the best features of quantum and classical Bayesian updating.  We do so in the subsequent section, wherein we show how a classical model can be stored for the system that can be reverted to in the event that a failure is observed in a quantum Bayesian update.

\section{Semi--classical Bayesian updating}
Approximations are therefore often needed to make both classical as well as quantum Bayesian inference tractable.  However, the purpose of these approximations is very different.  Classical methods struggle when dealing with probability distributions in high--dimensional spaces, and sophisticated methods like sequential Monte--Carlo approximations are often employed to reduce the effective dimension~\cite{liu_combined_2001,minka2001expectation,van2000unscented}.  However, the non--linear nature of the update rule and the problem of extracting information from the posterior distribution are not issues in the classical setting.  Our quantum algorithm has the exact opposite strengths and weaknesses: it can easily cope with exponentially large spaces but struggles emulating the non-linear nature of the update rule.

We attack the problem by making our quantum algorithm a little more classical, meaning that through out the learning process we aim to learn an approximate classical model for the posterior alongside the quantum algorithm.  This classical model allowsus to approximately re--prepare the state should an update fail throughout the updating process.  This removes the exponential scaling, but results in an approximate inference.  We refer to this procedure as \emph{quantum resampling} as it is reminiscent of resampling in sequential Monte--Carlo algorithms or other particle filter methods such as assumed-density filtering~\cite{minka2001expectation}.  In order to prepare the distribution,
we model the posterior distribution as a Gaussian distribution with mean and covariance equal to that of the true posterior.  This choice is sensible because once the Gaussian distribution is specified, the Grover--Rudolph state preparation method~\cite{GR02} can be used to prepare such states as their cumulative distribution functions can be efficiently computed.
Alternatively, for one--dimensional problems, such states could be manufactured by approximate cloning.

We are now equipped to define a semi-classical Bayesian update.
\begin{definition}
A semi--classical Bayesian update of a prior state on $\mathbb{C}^N$, for a discrete observable variable $E$, likelihood function $P(E|x)$ and family of probability distributions $F(x;\rho)$ parameterized by the vector $\rho$, maps 
$$\sum_x \sqrt{P(x)} \ket{x}\mapsto \rho: F(x;\rho) \approx P(x|E). $$\label{def:semi}
\end{definition}
We call this process semi--classical updating because it yields an approximate classical model for the posterior distribution.  This model can take many forms in principle; as an example, we could consider this model to be a Gaussian distribution that has the same mean and standard deviation as the posterior distribution.
Semi--classical Bayesian updating will be discussed in more detail in the following section, but for now we will focus on quantum Bayesian updating.


We need a means to measure the expectation values and components of the covariance matrix of the posterior for this method to work.  We provide such a method, based on the Hadamard test,  below.

\begin{lemma}
Given a unitary operator $U\in \mathbb{C}^{N\times N}$ such that $U\ket{0}=\ket{\psi}:=\sum_x \sqrt{P(x)} \ket{x}$, an observable $\Lambda=\sum_x \lambda_x \ketbra{x}{x}$ and an estimate $\lambda_0:\max_x|\lambda_x-\lambda_0|\le \Delta\lambda$, there exists a quantum algorithm to estimate $(\bra{\psi} \Lambda\ket{\psi} -\lambda_0)$ within error $\epsilon$ with probability at least $8/\pi^2$ using $ O(\Delta\lambda/\epsilon)$ applications of $U$ and queries to an oracle $O_\lambda$ such that $O_\lambda\ket{x}\ket{0}= \ket{x}\ket{\lambda_x}$.\label{lem:ampest}
\end{lemma}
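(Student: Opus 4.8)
The plan is to combine phase estimation (or amplitude estimation) with a Hadamard-test-style circuit that encodes the eigenvalue $\lambda_x$ as a relative phase. The key realization is that $\bra{\psi}\Lambda\ket{\psi} = \sum_x P(x)\lambda_x$ is an expectation value over the distribution $P(x)$, so if we can build a unitary whose action on $\ket{\psi}$ imprints a phase $e^{i\theta_x}$ proportional to $(\lambda_x - \lambda_0)$, then the overlap of the resulting state with $\ket{\psi}$ is $\sum_x P(x) e^{i\theta_x}$, whose real and imaginary parts are (to leading order) $1 - \tfrac12\sum_x P(x)\theta_x^2$ and $\sum_x P(x)\theta_x$. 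Estimating these via the Hadamard test and then inverting the relation recovers $\bra{\psi}\Lambda\ket{\psi} - \lambda_0$.

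\textbf{Construction.} First, using the oracle $O_\lambda$, map $\sum_x\sqrt{P(x)}\ket{x}\ket{0}\mapsto\sum_x\sqrt{P(x)}\ket{x}\ket{\lambda_x}$. Controlled on the eigenvalue register, apply the phase $e^{i t(\lambda_x - \lambda_0)}$ for a time parameter $t$ chosen so that $t\,\Delta\lambda$ is a small constant (this keeps all accumulated phases in a window where the small-angle expansion is valid, which is what the hypothesis $\max_x|\lambda_x-\lambda_0|\le\Delta\lambda$ buys us); then uncompute $O_\lambda$. Call the net unitary $W_t$; it satisfies $W_t\ket{\psi} = \sum_x\sqrt{P(x)}\,e^{it(\lambda_x-\lambda_0)}\ket{x}$, so $\bra{\psi}W_t\ket{\psi} = \sum_x P(x)e^{it(\lambda_x-\lambda_0)} = \expect{e^{it(\Lambda-\lambda_0)}}{}$. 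Run the Hadamard test on $W_t$ with control qubit prepared in $\ket{+}$ (and, in a separate run, with the $S^\dagger$ phase inserted) to estimate $\mathrm{Re}\,\bra{\psi}W_t\ket{\psi}$ and $\mathrm{Im}\,\bra{\psi}W_t\ket{\psi}$; each single run uses $O(1)$ applications of $U$ (to prepare $\ket{\psi}$ and, inside the controlled-$W_t$, one more) and $O(1)$ queries to $O_\lambda$.

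\textbf{Extraction and error analysis.} Write $\mu := \bra{\psi}\Lambda\ket{\psi}-\lambda_0$ and $\sigma^2 := \sum_x P(x)(\lambda_x-\lambda_0)^2 - \mu^2$; both are bounded by $O(\Delta\lambda)$ and $O(\Delta\lambda^2)$ respectively. Expanding, $\mathrm{Im}\,\bra{\psi}W_t\ket{\psi} = t\mu + O(t^3\Delta\lambda^3)$, so the imaginary part alone gives $\mu = \mathrm{Im}\,\bra{\psi}W_t\ket{\psi}/t + O(t^2\Delta\lambda^3)$. To estimate $\mathrm{Im}\,\bra{\psi}W_t\ket{\psi}$ to additive error $\delta$ with probability $\ge 8/\pi^2$ requires $O(1/\delta^2)$ Hadamard-test repetitions by a Chernoff bound — but to hit the stated $O(\Delta\lambda/\epsilon)$ query count we instead use \emph{amplitude estimation} on the Hadamard-test success probability, which estimates $\mathrm{Im}\,\bra{\psi}W_t\ket{\psi}$ to error $\delta$ with $O(1/\delta)$ applications of the controlled-$W_t$ circuit and succeeds with probability at least $8/\pi^2$~\cite{BHM+02}. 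Choosing $t = \Theta(1/\Delta\lambda)$ makes the systematic expansion error $O(\Delta\lambda)\cdot O(\epsilon/\Delta\lambda)^{\!?}$ negligible once we also set $\delta = \Theta(t\epsilon) = \Theta(\epsilon/\Delta\lambda)$; then the total cost is $O(1/\delta) = O(\Delta\lambda/\epsilon)$ applications of $U$ and of $O_\lambda$, and $\mu$ is recovered within error $\epsilon$.

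\textbf{Main obstacle.} The delicate point is controlling the nonlinear (higher-order) terms in the phase expansion simultaneously with the statistical/amplitude-estimation error: choosing $t$ too large makes $\mathrm{Im}\,\bra{\psi}W_t\ket{\psi}$ a badly nonlinear function of $\mu$ (and the map $\mu\mapsto \mathrm{Im}\,\expect{e^{it(\Lambda-\lambda_0)}}{}$ can even be non-injective once $t\Delta\lambda \gtrsim \pi$), while choosing $t$ too small forces $\delta$ — and hence the query count — to blow up. The sweet spot $t=\Theta(1/\Delta\lambda)$ is exactly what makes both the systematic bias $O(t^2\Delta\lambda^3)=O(\Delta\lambda)$ controllable by a constant factor absorbed into $t$ and the query count $O(\Delta\lambda/\epsilon)$. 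A secondary wrinkle is that $O_\lambda$ as given acts as $\ket{x}\ket{0}\mapsto\ket{x}\ket{\lambda_x}$ rather than as a phase oracle; the fix is the standard compute–phase–uncompute sandwich noted above, which only doubles the $O_\lambda$ count and leaves the ancilla clean so the Hadamard test interferes correctly.
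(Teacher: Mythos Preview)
Your phase-encoding route has a real gap in the error budget, which you yourself flag with the ``$?$''. With $t=\Theta(1/\Delta\lambda)$ the truncation bias in $\mu=\mathrm{Im}\,\bra{\psi}W_t\ket{\psi}/t$ is $O(t^2\Delta\lambda^3)=O(\Delta\lambda)$, and nothing in the hypotheses forces $\Delta\lambda\le\epsilon$; in the interesting regime $\epsilon\ll\Delta\lambda$ this bias dominates. To push it below $\epsilon$ you must take $t\lesssim\sqrt{\epsilon}/\Delta\lambda^{3/2}$, whence the amplitude-estimation precision becomes $\delta=\Theta(t\epsilon)=\Theta(\epsilon^{3/2}/\Delta\lambda^{3/2})$ and the query count degrades to $O((\Delta\lambda/\epsilon)^{3/2})$, not $O(\Delta\lambda/\epsilon)$. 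The tension you call the ``main obstacle'' is not resolved by the choice $t=\Theta(1/\Delta\lambda)$; that choice only balances bias against cost at the wrong scale.

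The paper avoids the problem by encoding $(\lambda_x-\lambda_0)/\Delta\lambda$ in an \emph{amplitude} rather than a phase. After loading $\lambda_x$ with $O_\lambda$, a controlled $R_y$ rotation on a fresh ancilla produces
\[
\sum_x\sqrt{\tfrac{P(x)}{2}}\,\ket{x}\ket{\lambda_x}\Bigl(\sqrt{1+\tfrac{\lambda_x-\lambda_0}{\Delta\lambda}}\,\ket{1}+\sqrt{1-\tfrac{\lambda_x-\lambda_0}{\Delta\lambda}}\,\ket{0}\Bigr),
\]
and the probability of reading $1$ on the ancilla is \emph{exactly} $\tfrac12+\tfrac{\bra{\psi}\Lambda\ket{\psi}-\lambda_0}{2\Delta\lambda}$, an affine function of $\mu$ with no higher-order remainder (the bound $|\lambda_x-\lambda_0|\le\Delta\lambda$ is used only to keep the radicands nonnegative). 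Amplitude estimation on this marked probability to additive error $\epsilon/(2\Delta\lambda)$ then returns $\mu$ to error $\epsilon$ with $O(\Delta\lambda/\epsilon)$ calls to $U$ and $O_\lambda$ and success probability at least $8/\pi^2$. The missing idea in your attempt is precisely this linear amplitude encoding, which dissolves the bias-versus-cost tradeoff rather than balancing it.
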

\begin{proof}
By following the reasoning in~\lem{rejection}, we can prepare the following state using one query to $O_\lambda$ and one application of $U$:
\begin{equation}
\sum_{x} \sqrt{\frac{P(x)}{2}} \ket{x}\ket{\lambda_x}\left(\sqrt{1+\frac{\lambda_x-\lambda_0 }{\Delta \lambda}}\ket{1} + \sqrt{1-\frac{\lambda_x-\lambda_0 }{\Delta \lambda}}\ket{0} \right).\label{eq:hadtest}
\end{equation}
The probability of measuring $1$ is 
\begin{equation}
\sum_x \frac{P(x)}{2} \left(1+ \frac{\lambda_x - \lambda_0}{\Delta \lambda} \right)=\frac{1}{2} + \frac{\bra{\psi} \Lambda \ket{\psi}-\lambda_0}{2\Delta \lambda}.\label{eq:mark}
\end{equation}
This probability can be learned within additive error $\delta$ using $O(1/\delta^2)$ samples and hence $\bra{\psi} \Lambda \ket{\psi}-\lambda_0$ can be learned within error $\epsilon$ using $O(\Delta \lambda^2 /\epsilon^2)$ samples.

This probability can also be learned using the amplitude estimation algorithm.  Amplitude estimation requires that we mark a set of states in order to estimate the probability of measuring a state within that set.  Here we mark all states in~\eq{mark} where the rightmost qubit is $1$. The amplitude estimation algorithm then requires $O(1/\delta)$ queries to $U$ and the above state preparation method to estimate the probability to within error $\delta$ and store it in a qubit register~\cite{BHM+02}.  Amplitude estimation has a probability of success of at least $8/\pi^2$.  The result then follows from taking $\delta=\epsilon/\Delta \lambda$.
\end{proof}
%


We now turn our attention to estimating the mean and covariance of the posterior distribution that arises from quantum updating.  This is not quite a trivial application of~\lem{ampest} because our method for performing the update is non-unitary, which violates the assumptions of the Lemma.  We avoid this problem by instead estimating these moments in a two--step probability estimation process.  This approach is described in the following corollary.

\begin{corollary}
Assume that $\lambda_0$ is a vector containing each $\langle x_i\rangle$ and each $\langle x_i x_j \rangle$ evaluated over the prior and $\Delta \lambda \ge \max_k |\bra{P(x|E)} \Lambda_k \ket{P(x|E)}-\lambda_{0,k}|$ where $\Lambda_k$ is the operator corresponding to $x_i$ or $x_ix_j$  depending on the index $k$.  The mean and covariance matrix can be computed within error $\epsilon$ in the max--norm using $\tilde{O}\left(\frac{D^2\Delta \lambda \Gamma_E}{\epsilon \langle P(x), P(E|x)\rangle}\right)$ queries to $O_E$.\label{cor:method}
\end{corollary}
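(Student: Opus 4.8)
The plan is to run the Hadamard/amplitude-estimation test of \lem{ampest}, but in two stages, so as to sidestep the fact that the update of \lem{rejection} is heralded rather than unitary. For each observable $\Lambda_k$ among the $D$ operators $x_i$ and the $O(D^2)$ operators $x_i x_j$ (all of which act diagonally on $\ket{x}$ with an eigenvalue $\lambda_x$ computable without any call to $O_E$), I would build a single unitary $W_k$ that (i) prepares the prior state $\ket\psi = \sum_x \sqrt{P(x)}\ket{x}$ with the state-preparation unitary $U$ of \lem{ampest}, (ii) writes $P(E|x)$ into a register with one query to $O_E$ and appends the ``rejection'' qubit exactly as in \eq{rejection}, and (iii) writes $\lambda_x$ and appends a ``Hadamard-test'' qubit exactly as in \eq{hadtest}. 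Thus $W_k$ costs one query to $O_E$. Let $p_1$ be the probability that the rejection qubit reads $1$ and $p^{(k)}_{11}$ the probability that both flag qubits read $1$. Conditioning on the rejection qubit being $1$ leaves the normalised posterior state, so by the computations in \eq{rejection} and \eq{mark} we get $p_1 = \langle P(x), P(E|x)\rangle/\Gamma_E$ and $p^{(k)}_{11} = \frac{\langle P(x),P(E|x)\rangle}{2\Gamma_E}\left(1 + \frac{\mu_k - \lambda_{0,k}}{\Delta\lambda}\right)$, where $\mu_k := \bra{P(x|E)}\Lambda_k\ket{P(x|E)}$ is the posterior moment we want. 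Hence $\mu_k - \lambda_{0,k} = \Delta\lambda\left(2 p^{(k)}_{11}/p_1 - 1\right)$, and it suffices to estimate the single number $p_1$ together with each $p^{(k)}_{11}$.

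The second stage is amplitude estimation. Because $W_k$ is unitary and the two flagged subspaces are reflected by a controlled $Z$ requiring no oracle call, amplitude estimation estimates $p_1$ (respectively $p^{(k)}_{11}$) to additive error $\delta$ using $O(1/\delta)$ applications of $W_k$, i.e.\ $O(1/\delta)$ queries to $O_E$, succeeding with probability at least $8/\pi^2$. Taking $\delta = \Theta\left(\frac{\epsilon\,\langle P(x),P(E|x)\rangle}{\Gamma_E\,\Delta\lambda}\right)$ and propagating through the ratio gives $\left|2\hat p^{(k)}_{11}/\hat p_1 - 2 p^{(k)}_{11}/p_1\right| = O(\delta/p_1) = O(\epsilon/\Delta\lambda)$ once $\delta$ is a sufficiently small constant times $p_1 \epsilon/\Delta\lambda$ (the $1/p_1$ appearing here is exactly what produces the $\Gamma_E/\langle P(x),P(E|x)\rangle$ factor), so the estimate $\hat\mu_k := \lambda_{0,k} + \Delta\lambda(2\hat p^{(k)}_{11}/\hat p_1 - 1)$ lies within $\epsilon$ of $\mu_k$. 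To make all $1 + O(D^2)$ of these estimates correct at once, I would repeat each amplitude estimation $O(\log D)$ times and take the median, driving the per-estimate failure probability below $1/D^2$, then union bound; this is the sole origin of the logarithmic factors absorbed into $\tilde{O}$.

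Summing, estimating every first and second posterior moment costs $O(D^2) \cdot O(1/\delta) \cdot O(\log D) = \tilde{O}\left(\frac{D^2 \Delta\lambda\,\Gamma_E}{\epsilon\,\langle P(x),P(E|x)\rangle}\right)$ queries to $O_E$. The posterior mean is read off directly as $\hat\mu_k$ for $\Lambda_k = x_i$, and each covariance entry is formed as $\hat\mu_{(ij)} - \hat\mu_{(i)}\hat\mu_{(j)}$, whose error in the max-norm is $O(\epsilon)$ provided the posterior means are $O(1)$ (otherwise one rescales $\epsilon$ by the relevant moment magnitude, which the hypothesis on $\Delta\lambda$ already bounds). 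I expect the main obstacle to be this middle stage: since the update is non-unitary, \lem{ampest} cannot be invoked verbatim, and one must argue carefully that the quotient of two independently amplitude-estimated probabilities reconstructs $\mu_k$ with controlled error even when $p_1$ is small, while ensuring that the $O(D^2)$-fold union bound costs only logarithmic overhead in the query count.
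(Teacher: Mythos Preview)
Your proposal is correct and follows essentially the same approach as the paper: both recognize that the non-unitary update precludes a direct application of \lem{ampest}, both circumvent this by estimating the two probabilities $P(1)$ and $P(11)$ separately via amplitude estimation on the full unitary preparation circuit, and both recover $\langle\Lambda_k\rangle$ from the ratio $2P(11)/P(1)-1$ with the same $O(\delta/P(1))$ error propagation and $O(D^2)$ loop. Your write-up is in fact slightly more careful than the paper's in making the median/union-bound overhead explicit and in noting the additional $O(\epsilon)$ error incurred when forming covariance entries from the estimated moments.
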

\begin{proof}
Our method works by classicially looping over all the components of the $\lambda_0$ vector, which we denote $\lambda_{0,k}$.
For each $k$, we then need to prepare the following state conditioned on evidence $E$ to compute the corresponding probability 
\begin{equation}
\sum_{x} \sqrt{\frac{P(x|E)}{2}} \ket{x}\ket{\lambda_x^{(k)}}\left(\sqrt{1+\frac{\lambda_x^{(k)}-\lambda_{0,k} }{\Delta \lambda}}\ket{1} + \sqrt{1-\frac{\lambda_x^{(k)}-\lambda_{0,k} }{\Delta \lambda}}\ket{0} \right),\label{eq:hadtest2}
\end{equation}
where $\lambda_x^{(k)}$ is of the form $x_i$ or $x_ix_j$ depending on the value of $k$.  

We cannot directly apply the previous lemma to learn the requisite values because the method for preparing the posterior probability distribution is non--unitary.
We address this by breaking the parameter estimation process into two steps, each of which involves learning a separate probability that we call $P(1)$ and $P(11)$.
Let $P(1)= \sum_x P(x) P(E|x)/\Gamma_E$ be the probability of performing the quantum Bayesian update.  Let $P(11)$ be the probability of both performing the update and measuring the right most qubit in~\eq{hadtest} to be $1$.  This probability is
\begin{equation}
P(11) = \frac{P(1)}{2}\left(1+\frac{\langle \Lambda_k \rangle -\lambda_{0,k}}{\Delta \lambda} \right),
\end{equation}
where $\Lambda_k$ is either of the form $x_i$ or $x_ix_j$ depending on the index $k$ and $\langle \cdot \rangle$ refers to the expectation of a quantity in the posterior state.
Therefore $\langle \Lambda_k \rangle$ can be computed from $P(11)$ and $P(1)$ via
\begin{equation}
\langle \Lambda_k \rangle = \left(2\frac{P(11)}{P(1)}-1 \right)\Delta \lambda + \lambda_{0,k},
\end{equation}
If we estimate $P(1)$ and $P(11)$ within error $O(\delta)$ then the error in $\langle \Lambda_k \rangle$ is from calculus $O(\delta/P(1))$ since $P(1)\ge P(11)$.  Therefore $\langle \Lambda_k \rangle$ can be estimated to within error $\epsilon$ if $\delta \in O(\epsilon P(1) /\Delta \lambda)$.  Bounds on the cost of amplitude estimation give the cost of this to be~\cite{BHM+02}
\begin{equation}
\tilde{O} \left(\frac{\Gamma_E \Delta \lambda}{\epsilon \sum_x P(x) P(E|x)} \right).
\end{equation}
The result then follows from noting that there are $O(D^2)$ different values of $k$ that need to be computed to learn the expectation values needed to compute the components of the posterior mean and covariance matrix.
\end{proof}
This shows that if we require modest relative error (i.e. $\Delta \lambda \in O(\epsilon)$) and $\Gamma_E$ is reasonably tight then this process is highly efficient.  In contrast, previous results that do not use these factors that incorporate apriori knowledge of the scale of these terms may not be efficient under such assumptions.

Below we combine these ideas to construct an online quantum algorithm that is capable of efficiently processing a series of $L$ pieces of data before outputting a classical model for the posterior distribution in the quantum device.  This result is key to our argument because it provides a result that one can fall back on if an update fails, thereby removing the problem of exponentially shrinking success probability at the price of only retaining incomplete information about the posterior distribution.

\begin{theorem}
Let $F(x;\mu, \Sigma)$ be a family of approximations to the posterior distribution parameterized by the posterior mean $\mu$ and the posterior covariance matrix $\Sigma$ and $\{E_k: k=1,\ldots L\}$ be a set of $L$ observable variables.  Then a semi--classical update of a quantum state $\sum_x \sqrt{P(x)} \ket{x}$ can be performed using a number of queries to $O_E$ that is in
$$
\tilde{O} \left(\frac{LD^2 \Delta  \lambda}{\epsilon \langle  P(x), \prod_{k=1}^LP(E_k|x)/\Gamma_{E_k}\rangle} \right).
$$
\end{theorem}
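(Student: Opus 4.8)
The plan is to build the semi-classical update by iterating the quantum rejection-sampling primitive of \lem{rejection} across all $L$ pieces of evidence, and then invoke the two-step probability-estimation machinery of \cor{method} to read off the posterior mean and covariance. First I would observe that performing the quantum Bayesian updates for $E_1,\ldots,E_L$ in sequence, without intermediate measurement, is equivalent to a single heralded operation: adjoin one ancilla qubit per update, apply $O_{E_k}$ followed by the controlled rotation $R_y(2\sin^{-1}(P(E_k|x)/\Gamma_{E_k}))$ for each $k$, and declare success only if all $L$ ancillas are measured to be $1$. Conditioned on that joint event, the amplitude on $\ket{x}$ is proportional to $\sqrt{P(x)\prod_{k=1}^L P(E_k|x)}$, which by \eq{Bayes} is exactly (proportional to) $\sqrt{P(x|E_1,\ldots,E_L)}$, so the heralded branch is the desired posterior state. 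The success probability of this composite heralding event is $\langle P(x),\prod_{k=1}^L P(E_k|x)/\Gamma_{E_k}\rangle$, since the per-$x$ success amplitude squared is $\prod_k P(E_k|x)/\Gamma_{E_k}$ and these factor through the product of the independent ancilla measurements.

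Next I would plug this composite heralded state-preparation routine into \cor{method} in place of the single-evidence routine used there. The key point is that \cor{method} never required the update to be unitary — it only needed a heralded preparation of the posterior with a known success probability, which it then handles by the $P(1)$/$P(11)$ two-step amplitude-estimation trick. Replacing the single-step success probability $\langle P(x),P(E|x)\rangle/\Gamma_E$ by the $L$-step quantity $\langle P(x),\prod_{k=1}^L P(E_k|x)/\Gamma_{E_k}\rangle$ and noting that each run of the composite routine costs $O(L)$ queries to the oracles $O_{E_k}$ (one per update), the cost bound of \cor{method} becomes
$$
\tilde{O}\left(\frac{L D^2 \Delta\lambda}{\epsilon\,\langle P(x),\prod_{k=1}^L P(E_k|x)/\Gamma_{E_k}\rangle}\right),
$$
where the extra factor of $L$ is the per-sample query count and the $D^2$ counts the $O(D^2)$ components of $\mu$ and $\Sigma$ (the $D$ first moments $\langle x_i\rangle$ and the $O(D^2)$ second moments $\langle x_i x_j\rangle$) that must be estimated one at a time. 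Once all these expectation values are in hand to max-norm accuracy $\epsilon$, the parameters $(\mu,\Sigma)$ of the approximating family $F(x;\mu,\Sigma)$ are determined, completing the semi-classical update per \defn{semi}.

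I expect the main obstacle to be handling the amplitude-estimation cost correctly in the regime where the joint success probability $\langle P(x),\prod_k P(E_k|x)/\Gamma_{E_k}\rangle$ is exponentially small: naively, amplitude amplification on the composite event gives the stated bound, but one has to be careful that the two-step estimation of $P(1)$ and $P(11)$ (rather than direct amplitude amplification to success) is what yields a cost scaling as $1/P(1)$ rather than $1/\sqrt{P(1)}$, and to confirm this is genuinely what \cor{method} delivers when its subroutine is the $L$-fold composite. A secondary subtlety is propagating the calculus error analysis from \cor{method} — that estimating $P(1)$ and $P(11)$ to accuracy $O(\delta)$ gives $\langle\Lambda_k\rangle$ to accuracy $O(\delta/P(1))$ — through the now much smaller $P(1)$, and verifying that the choice $\delta\in O(\epsilon P(1)/\Delta\lambda)$ still controls the final error; this is routine but is where the denominator in the claimed query count comes from, so it must be done cleanly. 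Finally I would remark that this is exactly the fallback model one reverts to upon a failed update, which is the role this theorem plays in the surrounding argument.
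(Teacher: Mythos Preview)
Your proposal is correct and follows essentially the same approach as the paper: perform the $L$ heralded updates without measuring the success ancillas, observe that the joint success probability is $\langle P(x),\prod_k P(E_k|x)/\Gamma_{E_k}\rangle$, feed this composite $L$-query preparation into \cor{method}, and multiply through by the per-preparation cost of $L$ queries. If anything, your discussion of the $1/P(1)$ versus $1/\sqrt{P(1)}$ scaling and the error propagation through the two-step estimation is more explicit than the paper's own argument, which simply cites \cor{method} and substitutes the $L$-step success probability directly.
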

\begin{proof}
The algorithm is simple.  
\begin{enumerate}
\item Perform $L$ quantum Bayesian updates, but without measuring the qubits that determine whether the updates succeed or fail.
\item Use the method of~\cor{method} to learn the mean and covariance matrix of the quantum posterior distribution.
\item Return these quantities, which give a parameterization of the approximation to the posterior distribution.
\end{enumerate}

After step 1, we have from the independence of the successes that the probability of all $L$ updates succeeding is $\sum_x P(x) \prod_{k=1}^L P(E_k|x)/\Gamma_{E_k}$.  Thus step 2 can be performed using $\tilde{O}\left(\frac{D^2\Delta \lambda}{\epsilon \langle w, \prod_k P(E_k|x)/\Gamma_{E_k}\rangle}\right)$ preparations of the posterior state from~\cor{method}.  Since each such preparation requires $L$ queries to $O_E$ the total query complexity required to learn the posterior mean and variance is
\begin{equation}
\tilde{O} \left(\frac{LD^2 \Delta \lambda }{\epsilon \langle  P(x), \prod_{k=1}^LP(E_k|x)/\Gamma_{E_k}\rangle} \right)
\end{equation}

Finally, since the algorithm outputs the mean and covariance matrix of the posterior distribution, it outputs a function $F(x;\mu,\sigma)$ that captures (to within error $\epsilon$) the first two moments of the posterior distribution.  Thus the algorithm clearly performs a semi--classical update as per~\defn{semi}.
\end{proof}

The complexity of the above approximate quantum inference algorithm cannot be easily compared to that of exact Bayesian inference because both algorithms
provide very different pieces of information.  Until very recently,
no natural analogue of our quantum method could easily be found in the literature.  The result of~\cite{wiebe2015bayesian} provides such a classical analogue.  The classical query complexity of their algorithm is quadratically worse in its scaling with $\epsilon$.

Although we obtain a quadratic advantage in the scaling with $\epsilon$, it would be nice to obtain further algorithmic advantages using amplitude amplification.  Further advantages can be obtained in cases where the probabilities $P(11)$ and $P(1)$ are small by using amplitude amplification to boost these probabilities and then work backwards to infer the non--boosted probabilities.  Below we formally prove a theorem to this effect that formalizes a similar claim made informally in~\cite{WHW15}.

\begin{theorem}
Let $U$ be a unitary operator such that $U\ket{0}=\sqrt{a}\ket{\phi} + \sqrt{1-a}\ket{\phi^\perp}$ where $\braket{\phi}{\phi^\perp}=0$ for $0<a\le a_0<1$, $1-a_0\in \Theta(1)$ and let $S$ be a projector such that $S\ket{\phi}=-\ket{\phi}$ and $S\ket{\phi^\perp}=\ket{\phi^\perp}$.  Then $a$ can be estimated to within error $\epsilon$ using $\tilde{O}(\sqrt{a_0}/\epsilon)$ applications of $S$ with high probability.\label{thm:priorAmpEst}
\end{theorem}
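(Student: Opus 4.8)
The plan is to obtain the estimate from a single invocation of the quantum amplitude estimation algorithm of~\cite{BHM+02}, run with a number of Grover iterations tuned to the promised bound $a_0$, and then to boost the success probability by the standard median-of-repetitions trick.

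First I would assemble the Grover iterate. Let $S_0 = \id - 2\ketbra{0}{0}$ be the (query-free) reflection about the input state and set $Q = -U S_0 U^\dagger S$. The hypotheses on $U$ and $S$ guarantee that $Q$ leaves the two-dimensional subspace $\mathrm{span}\{\ket{\phi},\ket{\phi^\perp}\}$ invariant and acts there as a rotation by $2\theta$, where $\sin^2\theta = a$; the promise $0<a\le a_0<1$ with $1-a_0\in\Theta(1)$ keeps $\theta$ strictly inside $(0,\pi/2)$. Since each application of $Q$ uses exactly one application of $S$ (together with one $S_0$ and one each of $U$ and $U^\dagger$), counting applications of $S$ is the same as counting Grover iterations up to a constant factor.

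Next comes the core estimate. Amplitude estimation run with $M$ iterations of $Q$ returns a value $\tilde a$ with $|\tilde a - a|\le 2\pi\sqrt{a(1-a)}/M + \pi^2/M^2$ with probability at least $8/\pi^2$. Because $\sqrt{a(1-a)}\le\sqrt a\le\sqrt{a_0}$, choosing $M=\lceil C\sqrt{a_0}/\epsilon\rceil$ for a large enough constant $C$ makes the first term at most $\epsilon/3$. The delicate term is $\pi^2/M^2 = O(\epsilon^2/a_0)$, which is $O(\epsilon)$ only when $\epsilon = O(a_0)$; I would dispose of the complementary range trivially, since if $a_0\le\epsilon$ then $a\le a_0\le\epsilon$ and the algorithm simply reports $0$ with no queries at all. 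In the regime $a_0>\epsilon$ the $M^{-2}$ term is below $\epsilon/3$ as well, so $|\tilde a-a|\le\epsilon$. A single run succeeds with probability $8/\pi^2 > 1/2$, so taking the median of $O(\log(1/\delta))$ independent runs and invoking a Chernoff bound gives an estimate within $\epsilon$ with probability $1-\delta$; the total cost is $O(M\log(1/\delta)) = \tilde{O}(\sqrt{a_0}/\epsilon)$ applications of $S$.

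The step I expect to be the main obstacle is controlling the $M^{-2}$ bias term of amplitude estimation uniformly in $a$, which is what forces the case split on $\epsilon$ versus $a_0$ and is the reason the promises $a>0$ and $1-a_0\in\Theta(1)$ cannot simply be discarded. A route that more literally realizes the ``boost then work backwards'' idea of the preceding paragraph would first apply $k=\Theta(1/\sqrt a)$ rounds of amplitude amplification so that the amplified probability $\sin^2((2k+1)\theta)$ lands in a fixed subinterval of $(0,1)$, estimate it there, and then invert the map $a\mapsto\sin^2((2k+1)\arcsin\sqrt a)$ --- an approach that genuinely needs the amplitude kept away from both endpoints and an exponential search over $k$ that does not overshoot past $\pi/2$. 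The amplitude-estimation route above attains the same $\tilde{O}(\sqrt{a_0}/\epsilon)$ bound while sidestepping that bookkeeping.
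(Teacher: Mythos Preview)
Your argument is correct but follows a genuinely different route from the paper's. The paper does exactly the ``boost then work backwards'' procedure you describe in your final paragraph: it first performs $m\in\Theta(1/\sqrt{a_0})$ rounds of amplitude amplification to form a unitary $V$ with $V\ket{0}=\sin((2m+1)\sin^{-1}\sqrt{a})\ket{\phi}+\cdots$, then runs amplitude estimation on $V$ to learn $y=\sin^2((2m+1)\sin^{-1}\sqrt{a})$ within $\delta$, and finally inverts $a=\sin^2(\sin^{-1}(\sqrt{y})/(2m+1))$. A Taylor bound shows the inversion inflates the error by only $O(1/(m^2\sqrt{1-a}))$, so choosing $\delta\in\Theta(\epsilon m^2)$ yields total cost $\tilde O(m/\delta)=\tilde O(1/(\epsilon m))=\tilde O(\sqrt{a_0}/\epsilon)$; the constraint that $(2m+1)\sin^{-1}\sqrt{a}\le\pi/2$ is precisely what caps $m$ at $\Theta(1/\sqrt{a_0})$.

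Your direct use of the refined amplitude-estimation error $|\tilde a-a|\le 2\pi\sqrt{a(1-a)}/M+\pi^2/M^2$ from~\cite{BHM+02}, together with the case split $a_0\le\epsilon$ versus $a_0>\epsilon$, is cleaner and avoids the inversion bookkeeping entirely. One small observation: your route never actually uses the hypothesis $1-a_0\in\Theta(1)$, since you only need $\sqrt{a(1-a)}\le\sqrt{a_0}$; in the paper's proof that hypothesis is what keeps the $1/\sqrt{1-a}$ factor in the inversion derivative under control. So your remark that this promise ``cannot simply be discarded'' applies to the amplify-then-invert route but not to your own argument.
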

\begin{proof}
Our proof follows the same intuition as that of the proof of amplitude estimation in~\cite{BHM+02} except rather than performing amplitude estimation on $U\ket{0}$ we use amplitude amplification to first boost the probability and then use amplitude estimation to learn the boosted probability.  The actual value of $a$ is then inferred from the amplified value of $a$ learned in the amplitude estimation step.

First by following Lemma 1 in~\cite{BHM+02} we can apply a sequence of reflection operators that contains $m$ applications of $S$ to form a unitary operation $V$ such that performs, up to a global phase,
\begin{equation}
V\ket{0} = \sin((2m+1)\sin^{-1}(\sqrt{a}))\ket{\phi} + e^{i\theta}\cos((2m+1)\sin^{-1}(\sqrt{a}))\ket{\phi^\perp}.
\end{equation}
Since $V$ is a unitary operation, amplitude estimation can be used to learn $\sin^2((2m+1)\sin^{-1}(\sqrt{a}))$ to within error $\delta$ by using Theorem 12 of~\cite{BHM+02} with probability at least $8/\pi^2$ using $O(1/\delta)$ applications of $V$.  Thus using the Chernoff bound, $\sin^2((2m+1)\sin^{-1}(\sqrt{a}))$ can be estimated within the same error tolerance using $\tilde{O}(1/\delta)$ operations with high probability.

Since $V$ contains $m$ $S$ operators, the total number of applications of $S$ needed to infer this is $O(m/\delta)$.  However, although $\sin^2((2m+1)\sin^{-1}(\sqrt{a}))$ is inferred within error $\delta$, this does not imply that $a$ is.  If we define this estimated value to be $y$ and assume that $0\le (2m+1)\sin^{-1}(\sqrt{a}) \le \pi/2$ then
\begin{equation}
a = \sin^2\left(\frac{\sin^{-1}(\sqrt{y})}{2m+1}\right).\label{eq:aeq}
\end{equation}
If there is an error of $O(\delta)$ in $y$ then Taylor analysis implies that
\begin{equation}
a=\sin^2\left(\frac{\sin^{-1}(\sqrt{y})}{2m+1}\right) + O\left(\frac{\delta}{m^2\sqrt{1-a}} \right).
\end{equation}
Since $a<1$ the error is $O(\delta/m^2)$.  Hence if we desire error $\epsilon$ in $a$ then it suffices to take $\delta\in O(\epsilon m^2)$.  Thus $\tilde O(1/\epsilon m)$ applications of $S$ are needed to infer $a$ to within error $\epsilon$.

Although this may seem to suggest that taking large $m$ always leads to a better inference of $a$, this is not necessarily true for this inversion process.  This is because if 
\begin{equation}
m > \frac{1}{2}\left(\frac{\pi}{2\sin^{-1}(\sqrt{a})}-1\right)\label{eq:mvalid}
\end{equation} then~\eq{aeq} no longer holds.  Ergo $m\in O(1/\sin^{-1}(\sqrt{a}))\in O(1/\sqrt{a})$ for small $a$.  Since the user does not know $a$, the best that can be done is to take $m\in \Theta(1/\sqrt{a_0})$ since taking $a=a_0$ also guarantees~\eq{mvalid} does not hold for $a$.  Therefore the number of applications of $G$, for $m\in \Theta(1/\sqrt{a_0})$, needed to learn $a$ within error $\epsilon$ with high probability scales as
\begin{equation}
\tilde O\left(\frac{1}{\epsilon m} \right) \in \tilde O\left(\frac{\sqrt{a_0}}{\epsilon} \right),
\end{equation}
as claimed.
\end{proof}

As an additional note, this method described in this section is not limited to tracking the values of static latent variables.  If the latent variable has itself explicit time dependence then the above approach can
be modified to robustly track its variation in time.  This is discussed in more detail in~\app{time-dep}.

\section{Adaptive experiment design}

In science and engineering, inference problems frequently involve decision variables that can
be set in order to optimize the performance of the algorithm.  For example, in the phase estimation 
algorithm such a decision variable may be the amount of time that the system is allowed to 
evolve for.  Finding locally optimal parameters for inference can be computationally challenging
(especially for online learning problems). Here, we show that our algorithm allows for quantum
computing to be used to perform Bayesian experiment design with significant
advantages over classical methods.

In practice, Bayesian experiment design is often posed in terms of finding
experiments which maximize a \emph{utility function} such as the information
gain or the reduction in a loss function. Once a utility function is
chosen, the argmax can be found by gradient ascent methods provided that the
derivatives of the utility can be efficiently computed. In particular, since
the reduction in the \emph{quadratic loss} is given by the posterior variance,
our algorithm allows for computing gradients of the corresponding utility
function.


Formally, we need to define two quantities: the loss function and the Bayes
risk. In doing so, we will assume without loss of generality that the model
parameters are renormalized such that all components of $x$ lie in $[0, 1]$.
The loss function represents a penalty assigned to errors in the in our
estimates of $x$. We consider here the multiparameter generalization of the
mean-squared error, the quadratic loss. For an estimate $\hat{x}$,
\begin{equation}
    \mathcal{L}(x, \hat{x}) = \left(x - \hat{x}\right)^\TT \left(x - \hat{x}\right).
\end{equation}
Letting $\hat{x}$ be the Bayesian mean estimator for the posterior $P(x | d, c)$
and considering the single-parameter case,
\begin{equation}
    \mathcal{L}(x, P(x|d,c)) = \left(x-\int P(x'|d,c) x' \mathrm{d}x'\right)^2.
\end{equation}
Having defined the loss function, the risk is the expectation of the loss over
experimental data, $\expect{\mathcal{L}(x, \hat{x})}{d}$, where $\hat{x}$ is
taken to depend on the experimental data. The Bayes risk is then the
expectation of risk over both the prior distribution and the outcomes,
\begin{gather}
    \begin{aligned}
        \mathcal{R}(x, P(x)) & = \expect{\mathcal{L}(x, P(x | d, c))}{d, x \sim P(x)} \\
                             & = \int P(x)\int P(d|x,c) \left(x-\int P(x'|d,c) x' \mathrm{d}x'\right)^2 \mathrm{d}x\mathrm{d}d.
    \end{aligned}
\end{gather}
The Bayes risk for the quadratic loss function is thus the trace of the posterior covariance matrix,
averaged over possible experimental outcomes.
We want to find $c$ that minimizes the Bayes risk, so that a reasonable utility function to optimize for is the negative posterior variance,
\begin{equation}
    \label{eq:utility}
    \mathcal{U}(P(x),c) = -\int P(x)\int P(d|x,c) \left(x-\int P(x'|d,c) x' \mathrm{d}x'\right)^2 \mathrm{d}x\mathrm{d}d.
\end{equation}

The application of our algorithm is now made clear: like classical particle
filtering methods, our algorithm can estimate expectation values over
posterior distributions efficiently. Thus, $\mathcal{U}$ can be calculated using quantum resources,
including in cases where classical methods alone fail.
In the finite dimensional setting that we're interested in we simply replace these integrals by sums over the corresponding variables.
The derivatives of $\mathcal{U}$ can then be approximated for small but finite $h$ as 
\begin{equation}
\frac{\partial \mathcal{U}(P(x),c)}{\partial c_j} = \frac{\mathcal{U}(P(x),c+h\hat{c_j})-\mathcal{U}(P(x),c)}{h} + O(h^2).
\end{equation}
Thus if $c$ consists of $C$ different components then $O(C)$ calculations of the utility function are needed to estimate the gradient for a finite value of $h$.
This is the intuition behind our method, the performance of which is given in the following theorem.
\begin{theorem}
Assume that the prior distribution $P(x)$ has support only on the interval $x\in [0,1]$ and that the observable varible $E$ has support only on $D$ distinct values; then each component of the gradient of $\mathcal{U}$ can be computed within error $\epsilon$ using on average $\tilde{O}\left(D\sqrt{\max_{c,j}\left|\frac{\partial^3 U(c)}{\partial c_j^3} \right|}/\epsilon^{3/2}\right)$ queries to the likelihood function and the prior, for $\epsilon \le \min_d \int P(d|x)P(x) \mathrm{d}x/2$.
\end{theorem}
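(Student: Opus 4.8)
The plan is to turn the gradient into a small number of evaluations of $\mathcal{U}$, each estimated with the moment machinery already developed (\lem{ampest}, \cor{method}, and, for small probabilities, \thm{priorAmpEst}), and to pick the finite-difference step so that the quotient inherits the target accuracy. For a step size $h$ fixed below, estimate $\partial\mathcal{U}/\partial c_j$ with the displayed finite-difference estimate (or its central-difference variant), whose truncation error is $O(h^2\max_{c,j}|\partial^3\mathcal{U}/\partial c_j^3|)$; taking $h=\Theta(\sqrt{\epsilon/\max_{c,j}|\partial^3\mathcal{U}/\partial c_j^3|})$ keeps that contribution in $O(\epsilon)$. If each utility value is computed to additive error $\delta$, the quotient picks up a further $O(\delta/h)$ error, so it suffices to evaluate $\mathcal{U}$ to accuracy $\delta=\Theta(\epsilon h)=\Theta(\epsilon^{3/2}/\sqrt{\max_{c,j}|\partial^3\mathcal{U}/\partial c_j^3|})$.

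For the evaluation step, note that since $\sum_d P(d|x,c)=1$ the quadratic-loss Bayes risk telescopes to $-\mathcal{U}(P(x),c)=\sum_x P(x)x^2-\sum_d P(d|c)\,\hat{x}_d^{\,2}$, where $P(d|c)=\sum_x P(x)P(d|x,c)$ and $\hat{x}_d$ is the posterior mean, and only the second sum depends on $c$. I would loop classically over the $D$ outcomes $d$: for each $d$, run the heralded update of \lem{rejection} with evidence $E=d$ (costing on average $O(\sqrt{\Gamma_d/P(d|c)})$ queries to $O_E$) and apply \cor{method} with the single observable $\Lambda=\sum_x x\ketbra{x}{x}$ to extract $\hat{x}_d$, reading $P(d|c)$ off the amplitude-estimation step that heralds the update. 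Because $x\in[0,1]$ we have $\Delta\lambda=O(1)$ and $\hat{x}_d,P(d|c)\in[0,1]$; propagating the errors through $P(d|c)\hat{x}_d^{\,2}$ (and discarding outcomes whose contribution already lies below $\delta$) shows it is enough to know the surviving $\hat{x}_d$ and $P(d|c)$ to accuracy $O(\delta)$, in relative terms for the small $P(d|c)$. The hypothesis $\epsilon\le\tfrac12\min_d\int P(d|x)P(x)\,\mathrm{d}x=\tfrac12\min_d P(d|c)$ ensures the $1/P(d|c)$ factor in \cor{method} and the expected-query count of the amplitude-amplification step inside \lem{rejection} never exceed the allotted budget, and whenever the relevant amplitude is small I would replace plain amplitude estimation with the boost-then-estimate routine of \thm{priorAmpEst}, so each surviving outcome costs $\tilde{O}(1/\delta)$ queries to $O_E$ and the prior-preparation unitary. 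Summing over the $O(D)$ outcomes, evaluating at both $c\pm h\hat{c}_j$, and substituting the value of $\delta$ then gives the claimed $\tilde{O}(D\sqrt{\max_{c,j}|\partial^3\mathcal{U}/\partial c_j^3|}/\epsilon^{3/2})$ expected queries; the ``on average'' is inherited from the expected-query behaviour of \lem{rejection} and \thm{priorAmpEst}, and the $\tilde{O}$ absorbs the logarithmic overhead of Chernoff-boosting the $8/\pi^2$ success probability of each amplitude estimation.

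The main obstacle is the amplification of statistical noise by the small divisor $h\sim\sqrt{\epsilon}$ in the finite-difference quotient, compounded by the $1/P(d|c)$ factors that \cor{method} introduces for each of the $D$ outcomes: taken naively these would demand a precision polynomially below $\delta$ from every subroutine and an extra factor of $D$ from splitting the error across outcomes. The argument only closes because of exactly the inputs the theorem supplies — the a~priori bound $\epsilon\le\tfrac12\min_d P(d|c)$ bounding every $1/P(d|c)$, the quadratic speed-up of \thm{priorAmpEst} for small amplitudes, and the identity $\sum_d P(d|c)=1$, which lets the budget be apportioned in relative rather than absolute terms across outcomes. Checking that this bookkeeping genuinely closes (and tracking the harmless logarithmic and constant losses it incurs) is the delicate part; the rest is a direct assembly of \lem{rejection}, \lem{ampest}, \cor{method} and \thm{priorAmpEst}.
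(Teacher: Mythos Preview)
Your proposal is correct and structurally matches the paper's proof: both use a centred finite difference, balance the $O(h^2\max|\partial^3\mathcal{U}/\partial c_j^3|)$ truncation against the $O(\delta/h)$ noise to land on the precision $\delta=\epsilon_0=\Theta(\epsilon^{3/2}/\sqrt{\max|\partial^3\mathcal{U}/\partial c_j^3|})$, and both evaluate $\mathcal{U}$ by looping classically over the $D$ outcomes and invoking amplitude estimation together with \thm{priorAmpEst}, with the hypothesis $\epsilon\le\tfrac12\min_d P(d|c)$ controlling the $1/P(d|c)$ blow-up.

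The only noteworthy difference is the decomposition of $\mathcal{U}$. The paper expands the square in~\eq{utility} into three nested integrals and, for the triple and quadruple terms, writes each $d$-contribution as a ratio $N(d|c)/P(d)$ with $N(d|c)\le P(d)^2$, then uses \thm{priorAmpEst} on that small numerator. You instead invoke the Bayes-risk identity $-\mathcal{U}=\mathbb{E}[x^2]-\sum_d P(d|c)\hat{x}_d^{\,2}$ and estimate $P(d|c)$ and $\hat{x}_d$ directly via \cor{method}. These are equivalent (indeed $N(d|c)/P(d)=P(d|c)\hat{x}_d^{\,2}$), and your route is arguably tidier since the first term is $c$-independent; on the other hand, the paper's formulation makes the applicability of \thm{priorAmpEst} more explicit, because the upper bound $N(d|c)\le P(d)^2$ is immediate from Cauchy--Schwarz, whereas in your version the corresponding saving has to be argued through the relaxed per-outcome accuracy $\delta/P(d|c)$ for $\hat{x}_d$. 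Both arguments are at the same level of rigor regarding the accumulation of the $D$ per-outcome errors into the final $\tilde O(D/\delta)$ bound.
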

\begin{proof}
The utility function can be directly computed on a quantum computer, but doing
so is challenging because of the need to coherently store the posterior means
of the distribution.  We simplify this by expanding the square in~\eq{utility}
to find
\begin{align}
\mathcal{U}(P(x),c) = &-\iint P(x) P(d|x,c) x^2 \mathrm{d}x\mathrm{d}d\nonumber\\
&+2\iiint P(x) P(d|x,c) P(x'|d,c) xx' \mathrm{d}x' \mathrm{d}d\mathrm{d}x\nonumber\\
&-\iiiint P(x) P(d|x,c) P(x'|d,c)P(x''|d,c) x'x'' \mathrm{d}x''\mathrm{d}x' \mathrm{d}d\mathrm{d}x.\label{eq:multiutility}
\end{align}
We then compute each of these terms individually and combine the results classically to obtain an estimate of $\mathcal{U}$.

The double integral term in~\eq{multiutility} is the easiest to compute.  It can be computed by preparing the state
\begin{equation}
\sum_x \sqrt{P(x)} \ket{x} \frac{1}{\sqrt{D}}\sum_{d=1}^D \ket{d}\left(\sqrt{{P(D|x,c)x^2}}\ket{1} + \sqrt{1 -P(D|x,c)x^2}\ket{0} \right). 
\end{equation}
The probability of measuring the right most qubit to be $1$ is $$\sum_x\sum_d P(x)P(D|x,c)x^2/D \le 1/D.$$  Therefore the desired probability can be found by estimating the likelihood of observing $1$ divided by the total number of outcomes $D$.  A direct application of amplitude estimation gives that the expectation value can be learned within error $\epsilon$ using $\tilde{O}(D/\epsilon)$ preparations of the initial state and evaluations of the likelihood function.  

Since the probability of success is known to be bounded above by $1/D$, \thm{priorAmpEst} implies that $\tilde{O}(\sqrt{D}/\epsilon_0)$ state preparations are needed to estimate the integral if we define $S$ to be a reflection operator that imparts a phase if and only if the ancilla qubit equals $1$.

The numerator can be estimated in exactly the same fashion, by preparing the state
\begin{equation}
\sum_x \sqrt{P(x)} \ket{x}\sum_{x'} \sqrt{P(x')} \ket{x'} \left(\sqrt{P(d|x',c){P(d|x,c)}xx'}\ket{1} + \sqrt{1 -P(d|x',c)P(d|x,c)xx'}\ket{0} \right),
\end{equation}
Note that the numerator, $N(d|c)$, is not $\Theta(1)$: it is in fact $O(P^2(d))$ as seen by the Cauchy--Schwarz inequality and $x\in [0,1]$
\begin{equation}
\sum_x \sum_{x'} P(x) P(x') P(d|x,c) P(d|x',c) x x' \le \left(\sum_x P(x) P(d|x,c)\right)^2=P_d^2.
\end{equation}

The triple integral in~\eq{multiutility} is much more challenging.  It can be expressed as
\begin{equation}
\iiint P(x) P(d|x,c) P(x'|d,c) xx' \mathrm{d}x' \mathrm{d}d\mathrm{d}x\nonumber = \iiint P(x) P(d|x,c) \frac{P(d|x',c)P(x')}{\int P(d|x',c) P(x')\mathrm{d}x'} xx' \mathrm{d}x' \mathrm{d}d\mathrm{d}x.
\end{equation}
The integral over $d$ in this expression is difficult to compute in superposition.  So instead, we forgo directly integrating over $d$ using the quantum computer and instead compute the integrand quantumly and classically integrate over $d$.  In many models $D$ will be small ($D=2$ is not uncommon) hence a polynomial reduction in the scaling with $D$ will often not warrant the additional costs of amplitude amplification.

For fixed $d$, the first step is to compute $P(d):=\int P(d|x,c) P(x)\mathrm{d}x$, which can be estimated by preparing the state
\begin{equation}
\sum_x \sqrt{P(x)} \ket{x} \left(\sqrt{{P(d|x,c)}}\ket{1} + \sqrt{1 -P(d|x,c)}\ket{0} \right),\label{eq:update1}
\end{equation}
and estimating, $P(d)$, the probability that the right--most qubit is $1$, which is the required probability.   This can be learned within error $\epsilon$ using amplitude estimation, which requires $\tilde{O}(1/\epsilon)$ queries to the initial state and the likelihood oracle~\cite{BHT+00}.

For simplicity let us define the integral to be $N(d|c)/P(d)$ and the approximation to the integral as $\tilde{N}(D|c)/\tilde{P}(d)$.  We then see from the triangle inequality that if we estimate the denominator to within error $\epsilon_0\le P(d)/2$ then
\begin{align}
\left|\frac{\tilde{N}(d|c)}{\tilde{P}(d)} -\frac{{N}(d|c)}{{P}(d)}\right|&\le \left|\frac{\tilde{N}(d|c)}{\tilde{P}(d)} -\frac{{N}(d|c)}{\tilde {P}(d)}\right|+\left|\frac{{N}(d|c)}{\tilde{P}(d)} -\frac{{N}(d|c)}{{P}(d)}\right|.\nonumber\\
&\le \frac{1}{P(d)-\epsilon_0}\left|{\tilde{N}(d|c)} -{{N}(d|c)}{}\right|+P(d)^2\left|\frac{1}{\tilde{P}(d)} -\frac{1}{{P}(d)}\right|\nonumber\\
&\le \frac{2}{P(d)}\left|{\tilde{N}(d|c)} -{{N}(d|c)}{}\right|+P(d)\left|\frac{1}{1-\epsilon_0/P(d)} -1\right|\nonumber\\
&\le \frac{2}{P(d)}\left|{\tilde{N}(d|c)} -{{N}(d|c)}{}\right|+\epsilon_0.\label{eq:2IntegralTerms}
\end{align}
Therefore under these assumptions it is necessary to estimate $N(d|c)$ to within error $O(\epsilon_0/P(d))$ to achieve error $O(\epsilon_0)$.
We can accelerate this inference process by observing that 
\begin{equation}
N(d|c)\le (P(d)+\epsilon_0)^2\in O(P^2(d)),
\end{equation}
since $\epsilon_0 \le P(d)/2$.  \thm{priorAmpEst} can then be used to estimate $N(d|c)$ within error $\delta$ using $\tilde O(P(d)/\delta)$ queries.  Since we need error $\epsilon_0P(d)$ the number of query operations needed to infer $N(d|c)$ within error $\epsilon_0$ is in $\tilde O(1/\epsilon_0)$.  This process needs to be repeated classically $D$ times so the total cost is $\tilde O(D/\epsilon_0)$ for this step as well.  Thus we see from~\eq{2IntegralTerms} that the total error can be made less than $\epsilon_0$, with high probability, using a number of queries that scales as $\tilde{O}(D/\epsilon)$.


The analysis of the quadrouple integral is exactly the same and requires $\tilde O(D/\epsilon)$ queries on average.  Thus the cost of evaluating the utility function to within error $\epsilon$ with high probability is $\tilde O(D/\epsilon)$.  

Given an algorithm that can compute $U(c)$ using a number of queries that scales as $\tilde{O}(D/\epsilon)$, we can estimate the derivative using a centered difference formula.  In particular we know that

\begin{equation}
\left|\frac{\partial U(c)}{\partial c_j}-\frac{U(c+\delta_j)-U(c-\delta_j)}{2\delta} \right|\le \max_{c,j} \left|\frac{\partial^3 U(c)}{\partial c_j^3} \right|\frac{\delta^2}{6},
\end{equation}
where $|\delta_j|=\delta$ and $\delta_j$ is a vector parallel to the unit vector $c_j$.
Since we cannot compute $U(c\pm\delta_j)$ exactly, the error we want to bound is
\begin{align}
&\left|\frac{\partial U(c)}{\partial c_j}-\frac{\tilde U(c+\delta_j)-\tilde U(c-\delta_j)}{2\delta} \right|\nonumber \\
&\qquad \le \left|\frac{\partial U(c)}{\partial c_j}-\frac{ U(c+\delta_j)- U(c-\delta_j)}{2\delta} \right|+\left|\frac{ U(c+\delta_j)- U(c-\delta_j)}{2\delta}-\frac{ \tilde U(c+\delta_j)- \tilde U(c-\delta_j)}{2\delta} \right|,
\end{align}
where $\tilde U$ is the approximation to the utility function that has error at most $\epsilon_0$.  The error is then
\begin{equation}
O\left(\max_{c,j} \left|\frac{\partial^3 U(c)}{\partial c_j^3} \right|{\delta^2}+\frac{\epsilon_0}{\delta}\right).
\end{equation}
Since $\delta$ is a free parameter that we will choose to make both sources of error equivalent.  This corresponds to $\delta= \epsilon_0^{1/3}/\max_{c,j} \left|\frac{\partial^3 U(c)}{\partial c_j^3} \right|^{1/3}$.  This gives an overall error of
\begin{equation}
O\left({\epsilon_0^{2/3}}{\max_{c,j}\left|\frac{\partial^3 U(c)}{\partial c_j^3} \right|^{1/3}} \right).
\end{equation}
If we wish to make this error $\epsilon$ then it suffices to take $\epsilon_0 = \epsilon^{3/2}/\sqrt{\max_{c,j}\left|\frac{\partial^3 U(c)}{\partial c_j^3} \right|}$.  Since the cost of computing $U(c\pm \delta_j)$ within error $\epsilon_0$ with high probability is $\tilde O(D/\epsilon_0)$ the cost estimates follow.
\end{proof}

This shows that we can use quantum techniques to achieve a polynomial speedup over classical methods for computing the gradient using sampling, which would require $O(D/\epsilon_0^2)$ queries.  It is also worth noting that high--order methods for estimating the gradient may be useful for further improving the error scaling.

Another interesting feature of this approach is that we do not explicitly use the qubit string representation for the likelihood to prepare states such as~\eq{update1}.   Similar states could therefore also be prepared for problems such as quantum Hamiltonian learning~\cite{WGF+14} by eschewing a digital oracle and instead using a quantum simulation circuit that marks parts of the quantum state that correspond to measurement outcome $d$ being observed.  This means that these algorithms can be used in concert with quantum Hamiltonian learning ideas to efficiently optimize experimental design, whereas no efficient classical method exists to do so because of the expense of simulation.

\section{Quantum Bayesian updating using repetition codes}
While we have addressed a semi--classical form of learning for quantum Bayesian inference, an interesting remaining question is whether the form of quantum
Bayesian inference that we consider has a well defined classical limit.  The hope would be that such a protocol would also be an approximate Bayesian method, but would not be susceptible to the probabilistic failures that plague the quantum approach.  We can reach such a limit by using a repetition code to perform a protocol that is similar to semi--classical updating, but does not involve storing classical information.  We also focus on the one-dimensional case in the following, but generalization to the multi-dimensional case is straightforward.

The repetition code that allows us to reach the classical limit of the quantum algorithm is trivial:
\begin{equation}
\sum_{j} \sqrt{P(x_j)} \ket{x_j} \mapsto \ket{\psi}:= \left(\sum_j \sqrt{P(x_j)} \ket{x_j}\right)^{\otimes K}.
\end{equation}
In order to learn the mean from such a state without destroying it, we need to add an additional register that stores an estimate of the mean-value to a fixed number of bits of precision.  This can be achieved using a simple arithmetic circuit.  We denote this state as
\begin{equation}
\sum_{x_1,\ldots,x_K} \sqrt{P(x_1)\cdots P(x_K)} \ket{x_1\ldots x_K} \ket{\bar{x}(x_1\ldots x_K)},
\end{equation}
where $\bar{x}$ is an approximation to the mean that is truncated to give error $\Delta\le \mu$.  For simplicity, we drop the explicit dependence of $\bar{x}$ on $x$ in the following.

Let $\mu= \sum_j P(x_j) x_j$ be the true mean.  Then as each of the distributions over the constituent $x_j$ is independent and assuming that $x_j \le X_{\rm max}$, the Chernoff bound states that
\begin{equation}
P\left(\left|\bar{x} - \mu\right|\ge \Delta \right)\le e^{-\frac{\Delta^2 K}{3\mu X_{\rm max}}}.
\end{equation}
Thus the probability of measuring a mean that deviates more than $\Delta$ from $\mu$ is at most $\delta$ if
\begin{equation}
K\ge \frac{3\mu X_{\rm max}}{\Delta^2}\ln\left(\frac{1}{\delta} \right).
\end{equation}
This implies that for every $\delta>0$ and every discretization error $\Delta$ there exists a value of $K$ such that the probability of measuring the discretized mean to be $\mu$ is at least $1-\delta$.

Let $\ket{\phi} = (\openone \otimes \ketbra{\mu}{\mu}) \ket{\psi} / |(\openone \otimes \ketbra{\mu}{\mu}) \ket{\psi}|$ then
\begin{equation}
|\braket{\psi}{\phi}|^2 = \frac{|\bra{\psi} (\openone \otimes \ketbra{\mu}{\mu}) \ket{\psi}|^2}{|(\openone \otimes \ketbra{\mu}{\mu}) \ket{\psi}|^2} \ge \frac{1-\delta}{|(\openone \otimes \ketbra{\mu}{\mu}) \ket{\psi}|^2} \ge 1-\delta.
\end{equation}
Thus up to error $O(\delta)$, we can treat the state after learning the mean as identical to the state that existed before learning $\mu$.  Ergo despite the fact that the $x_i$ used in the distribution are no longer identically distributed, we can treat them as if they were while incurring an error of at most $\delta$ in the estimate of $P(|\bar{x} - \mu|\le \Delta)$.  
From the triangle inequality, it is then straight forward to see that after $L$ such steps that the total error incurred in the final state (as measured by the trace distance) is at most $L\sqrt{\epsilon}$, which can be made at most $\sqrt{\epsilon}$ by choosing 
\begin{equation}
K\ge \frac{3\mu X_{\rm max}}{\Delta^2}\ln\left(\frac{L^2}{{\epsilon}} \right).
\end{equation}
This in turn means that the error in the inference of $\mu$ after $L$ steps is at most $X_{\rm max} \epsilon$.  The exact same argument can be applied to learn the mean--square value of $x$ and so the standard deviation can be learned in a similar fashion.

If $K$ is sufficiently large, then any branches that fail can be immediately repopulated by a distribution from a two--parameter family of distributions $F(x;\mu,\sigma^2)$.  This further carries an advantage because it does not necessitate that the entire distribution be approximated at each iteration, unlike semi--classical updating.

This shows that a redundant encoding can be used in order to protect the low--order moments against the effects of measurement.  Therefore means that even if some updates fail then these results can be erased and replaced with a Gaussian approximation to the posterior distribution (for example).  An explicit classical register is not needed in this approach, although since the entanglement of the expectation value register with the remaining qubits approaches zero.  In this sense, it becomes a classical register and this result can also be thought of as an examination of the classical limit of quantum Bayesian updating.

While this shows that repetition codes can allow the algorithm to proceed without classical memory, it makes substantial demands on the memory.  For even modest problems, it is likely to require thousands of copies of the state in order to be able to resist the effects of measurement back action on the state.  This shows that while error correction can allow quantum systems able to learn efficiently without classical memory, the resulting systems will seldom be small.  This suggests that there may be a tradeoff between system size and robustness that may make learning in small quantum systems highly challenging.

\section{Conclusion}
Our main contribution of this paper can be thought of as an analysis of the ability of small quantum systems' capacity to learn.  In it we have examined a class of quantum learning algorithms that require only logarithmic memory to update a register that stores its beliefs about a latent variable $x$ that it must infer from a set of observable variables $E$ that whose likelihoods are only known through access to a quantum oracle.  We show that such algorithms cannot in general be efficient, but provide a semi--classical algorithm that uses classical memory to circumvent this problem within the context of approximate Bayesian inference.

Our semi--classical algorithm has a number of performance advantages over classical methods.  It also can leverage quantum superposition to provide quadratic advantages for experiment design and also can be used to track the motion of time--dependent latent variables.  We further demonstrate the need to store the model in classical memory is in principle superflous because a quantum repetition code can be used to robustly encode this information in the quantum state.  This suggests that while small quantum systems may not be able to efficiently learn (especially in an online setting), redundant information can be used to protect the knowledge gained by the quantum system against the potentially destructive impacts of the non--linear transformations required by Bayesian inference.  This supports the conjecture that error correction is intimately linked to  learning.  

Although our work suggests that small quantum systems may face substantial difficulties when trying to perform Bayesian inference in an oracular setting, much more work is needed in order to provide a complete answer to the question.  In particular, a firm definition is needed in order to address the question of what learning even means for quantum systems outside of the confines of the definition we implicitly assume through our consideration of Bayesian updating.  A suitable answer to this question may not only shed light on the nature of learning in physical systems but also help us come to grips with the limitations that arises from trying to reason using agents whose ephemeral memories are kept in quantum states of matter.

\begin{acknowledgments}
We would like to thank J. Combes and J. Yard for valuable feedback and discussion as well as J. Emerson for suggesting the idea of quantum learning agents.
\end{acknowledgments}

\bibliography{qsmc2}


\appendix
\section{Asymptotic Stability of Updating}\label{app:stability}

Interestingly, this process of classically learning a model for the posterior need not be repeated forever.  If the true model has sufficient support in the final posterior then classical feedback is irrelevant because the quantum algorithm will converge to the true model as $L\rightarrow \infty$ if $P(E|x) \ne P(E|y)$ for all $x\ne y$, regardless whether success or failure is observed.  This is summarized in the following theorem.
\begin{theorem}
There exists $\delta>0$ such that if $|\ket{\psi} - \ket{x}| \le \delta$ then the method of~\lem{rejection} converges to $\ket{x}$ if the failure and success branches are treated equivalently and $P(E|x) \ne P(E|y)$ for all $x\ne y$.
\end{theorem}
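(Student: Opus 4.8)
The plan is to run the update of \lem{rejection} repeatedly with evidence $E_t$ drawn from the true likelihood $P(\cdot|x)$, keeping whichever measurement outcome occurs, and to analyse the resulting random sequence of posteriors $P^{(t)}$ by martingale methods. The key structural fact is that, for \emph{every} hypothesis $z$, the sequence $\{P^{(t)}(z)\}_t$ is a bounded martingale: conditioned on $E_t$, the success branch occurs with probability $S_t=\sum_z P^{(t)}(z)P(E_t|z)/\Gamma_{E_t}$ and reweights $z$ to $P^{(t)}(z)P(E_t|z)/(\Gamma_{E_t}S_t)$, while the failure branch reweights it to $P^{(t)}(z)(1-P(E_t|z)/\Gamma_{E_t})/(1-S_t)$, and the $S_t$-weighted average of the two is exactly $P^{(t)}(z)$. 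Hence each $P^{(t)}(z)$ converges almost surely, and $1-P^{(t)}(x)$ is a nonnegative martingale with initial value below $\delta^2$ (since $|\ket\psi-\ket x|\le\delta$ forces $P(x)\ge1-\delta^2$); Doob's maximal inequality then confines the trajectory to $\{P^{(t)}(x)\ge1/2\ \forall t\}$ outside an event of probability at most $2\delta^2$.

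The heart of the argument is to find a quantity that contracts strictly under \emph{both} outcomes. I would use the pairwise overlap $M^{(t)}_y:=\sqrt{P^{(t)}(x)P^{(t)}(y)}$ for each $y\ne x$. A one-line computation shows that conditioned on $E_t$ both branches multiply $M^{(t)}_y$ by the same factor $\phi(a,b)=\sqrt{ab}+\sqrt{(1-a)(1-b)}$, with $a=P(E_t|x)/\Gamma_{E_t}$ and $b=P(E_t|y)/\Gamma_{E_t}$, and by Cauchy--Schwarz $\phi(a,b)\le1$ with equality iff $a=b$, i.e.\ iff $P(E_t|x)=P(E_t|y)$. Because $P(\cdot|x)\ne P(\cdot|y)$ and any distribution agreeing with $P(\cdot|x)$ on $\mathrm{supp}\,P(\cdot|x)$ must equal it, there is an outcome $E_y$ with $P(E_y|x)>0$ and $P(E_y|x)\ne P(E_y|y)$; averaging $\phi$ over $E_t\sim P(\cdot|x)$ then gives $\mathbb E[M^{(t+1)}_y\mid\mathcal F_t]\le(1-c_y)M^{(t)}_y$ for some $c_y>0$. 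Iterating, $\mathbb E[M^{(t)}_y]\le(1-c_y)^tM^{(0)}_y\to0$, and since $M^{(t)}_y$ is a nonnegative supermartingale it converges a.s., with limit of vanishing expectation by Fatou, so $M^{(t)}_y\to0$ almost surely.

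To conclude, I would intersect the two events: on $\{P^{(t)}(x)\ge1/2\ \forall t\}$ we get $P^{(t)}(y)\le2\,M^{(t)}_y\to0$ for each $y\ne x$, so with finitely many hypotheses $1-P^{(t)}(x)=\sum_{y\ne x}P^{(t)}(y)\to0$ and hence $\ket{\psi^{(t)}}=\sum_z\sqrt{P^{(t)}(z)}\,\ket z\to\ket x$ in norm; this holds with probability at least $1-2\delta^2$, which can be made as close to $1$ as desired by shrinking $\delta$ (so $\delta$ is exactly the promised ``radius of convergence''). The main obstacle — and the reason a radius is needed at all — lies in the second paragraph: the failure branch is genuinely destructive, so plain potentials like $1-P^{(t)}(x)$ are martingales but not strict supermartingales, and the whole proof hinges on noticing that the overlap $\sqrt{P^{(t)}(x)P^{(t)}(y)}$ contracts under \emph{either} outcome with a defect governed precisely by the distinctness of the likelihoods, together with the small but necessary check that a distinguishing outcome has positive probability under the true model.
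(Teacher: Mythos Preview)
Your approach shares the paper's central insight---that the pairwise overlap picks up exactly the coefficient $\phi(a,b)=\sqrt{ab}+\sqrt{(1-a)(1-b)}\le1$, with equality iff $a=b$---but packages it quite differently. The paper reads the hypothesis as concerning a \emph{fixed} evidence $E$ with $P(E|\cdot)$ injective, traces over the success/failure qubit, and works perturbatively with the resulting density matrix: the off-diagonal element $\rho_{xy}$ is deterministically multiplied by $\phi$ at each step, so to leading order in $\delta$ the state collapses to $\ketbra{x}{x}$. You instead draw $E_t\sim P(\cdot|x)$, track the random pure-state posterior, and use martingale tools (Doob's maximal inequality, supermartingale convergence, Fatou). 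This is more rigorous than the paper's first-order expansion and yields an explicit high-probability guarantee, at the cost of more machinery and a somewhat different reading of both the setup and the non-degeneracy hypothesis.

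There is, however, a substantive misstatement in your second paragraph: the two branches do \emph{not} multiply $M_y^{(t)}$ by the same factor $\phi(a,b)$. On the success branch the factor is $\sqrt{ab}/S_t$ and on the failure branch it is $\sqrt{(1-a)(1-b)}/(1-S_t)$; these differ in general, and your closing remark that $M_y$ ``contracts under either outcome'' is false (for instance, if $P^{(t)}(x)\approx1$ then $S_t\approx a$ and the success-branch factor is $\approx\sqrt{b/a}$, which can exceed $1$). What \emph{is} true---and is all you actually need---is that the branch-weighted average is
\[
S_t\cdot\frac{\sqrt{ab}}{S_t}+(1-S_t)\cdot\frac{\sqrt{(1-a)(1-b)}}{1-S_t}=\phi(a,b),
\]
which is precisely the contraction you feed into the supermartingale inequality $\mathbb E[M^{(t+1)}_y\mid\mathcal F_t]\le(1-c_y)M^{(t)}_y$. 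So your conclusion survives, but the justification should be this direct averaging computation rather than the incorrect claim of identical or pointwise-contractive branch factors.
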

\begin{proof}
The algorithm that results from ignoring whether success or failure is measured in the method of~\lem{rejection} can be studied by examining the map that results from tracing over the success or failure register.
First, let us assume that the likelihood function is non--degenerate, meaning that $P(E|x)$ is unique for all $x$.  Then applying~\eq{rejection} we see that
\begin{equation}
\ket{x} \ket{P(E|x)}\ket{0}\mapsto \sqrt{P(x)} \ket{x} \ket{P(E|x)}\left(\sqrt{\frac{P(E|x)}{\Gamma_E}}\ket{1} + \sqrt{1-\frac{P(E|x)}{\Gamma_E}}\ket{0} \right).
\end{equation}
Because the state is not entangled, measuring the right most qubit does not affect the remaining state.  Therefore the transformation given by~\lem{rejection} has each computational basis state as an eigenvector with eigenvalue $1$.

Now let us assume that we apply the algorithm to the state $\ket{\psi} = \ket{x} +\delta \ket{x_2} + O(\delta^2)$ for $\delta\ll 1$.  It then follows from tracing over the register that the resultant state is
\begin{equation}
\ketbra{x}{x} + \delta\left(\sqrt{\frac{P(E|x)P(E|x_2)}{\Gamma_E^2}} + \sqrt{\left(1-\frac{P(E|x)}{\Gamma_E}\right)\left(1-\frac{P(E|x_2)}{\Gamma_E}\right)} \right)(\ketbra{x}{x_2}+\ketbra{x_2}{x}) + O(\delta^2).
\end{equation}
It is straightforward to see from calculus that the $O(\delta)$ term is maximized when $P(E|x)=P(E|x_2)$, which is forbidden under our assumptions.  Furthermore, the coefficient is at most $1$, ergo the resultant state can be expressed as
\begin{equation}
(\ket{x} + c\delta \ket{x_2})(\bra{x} + c\delta \bra{x_2}) + O(\delta^2),
\end{equation}
for $0\le c<1$.  Therefore the resulting state is equivalent to the initial state, but with the component orthogonal to it reduced by a factor of $c$.  This means that the algorithm converges to $\ket{x}$ after a sufficient number of repetitions given that $\delta\ll 1$.

Now let us imagine that an initial state of the form $\ket{x} + \delta \sum_{y\ne x} a_y \ket{y} + O(\delta^2)$ is prepared.  The density operator that results from applying the mapping in~\eq{rejection} and tracing over the last qubit is
\begin{equation}
\ketbra{x}{x} + \delta\sum_{y\ne x}a_y\left(\sqrt{\frac{P(E|x)P(E|y)}{\Gamma_E^2}} + \sqrt{\left(1-\frac{P(E|x)}{\Gamma_E}\right)\left(1-\frac{P(E|y)}{\Gamma_E}\right)} \right)(\ketbra{x}{y}+\ketbra{y}{x}) + O(\delta^2).
\end{equation}
Following the same argument it is clear that there exist $0\le c_y<1$ such that the resultant state is
\begin{equation}
\left(\ket{x} + \sum_{y\ne x}a_y c_y\delta \ket{y}\right)\left(\bra{x} + \sum_{y\ne x}c_y\delta \bra{y}\right) + O(\delta^2),
\end{equation}
It is clear that the resultant state can be written in the form is $\ket{x} + c\delta \ket{\psi} + O(\delta^2)$ where $0\le c\le \max_y c_y < 1$.  This shows that the algorithm converges to $\ket{x}$ even if the initial perturbation is a superposition of basis states.
\end{proof}

\section{Discretization Errors}\label{app:discrete}
Apart from the quantum resampling step, the only source of error that emerges in our inference algorithm is from the discretization of the problem.  We assume here that the underlying probability distribution $P(x)$ and the likelihood function $P(E|x)$ are differentiable functions of $x$ and assume without loss generality that $x\in [0,1]^D$.  We furthermore assume that the mesh used to approximate the probability distribution is uniform and a gridspacing of $\Delta x$ is used in each direction.  This means that the number of points is 
\begin{equation}
N= (\Delta x)^{-D}.
\end{equation}

For notational simplicity,  we take
\begin{equation}
\langle P(x), P(E|x) \rangle := \int P(x) P(E|x) \mathrm{d}^D x.
\end{equation}
We then give our main theorem below using this notation.

\begin{theorem}
Let $P(E|x)$ be a differentiable function of $x\in [0,1]^D$ such that $0< \max_E|\nabla P(E|x) |_{\rm max} \le \Lambda$ and assume $\langle P(E|x), P(x)\rangle \ne 0$.   A component of the posterior mean, $[x]_k$, can then  be approximated for any $k\in \{1,\ldots, K\}$ within error $\epsilon$ by simulating a Bayes update of $P(x)$ on a uniform mesh of $[0,1]^D$ with mesh spacing $\Delta x$ where 
$$\Delta x \le \min_E \frac{\epsilon \langle P(E|x),P(x)\rangle^2}{\langle P(E|x),P(x)\rangle^2+3{D} \Lambda},$$
and
$$\epsilon \le \frac{\langle P(E|x),P(x)\rangle^2+ 3D\Lambda}{2 D\Lambda\langle P(E|x),P(x)\rangle}.$$
\label{thm:errbd}
\end{theorem}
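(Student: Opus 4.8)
The plan is to bound the error in the posterior mean by tracking how the discretization error propagates through Bayes' rule. I would work with the continuous integrals $\langle P(E|x),P(x)\rangle = \int P(x)P(E|x)\,\mathrm{d}^Dx$ and $\int x_k P(x)P(E|x)\,\mathrm{d}^Dx$, and compare each to its Riemann-sum approximation on the uniform mesh with spacing $\Delta x$. The posterior mean component is $[x]_k = \frac{\int x_k P(x)P(E|x)\,\mathrm{d}^Dx}{\int P(x)P(E|x)\,\mathrm{d}^Dx}$, and the discretized version replaces both integrals with sums; the error is therefore the error of a quotient, which I would bound using the triangle-inequality trick already used in~\eq{2IntegralTerms}: split $\bigl|\tilde N/\tilde P - N/P\bigr| \le \frac{1}{P}\bigl|\tilde N - N\bigr| + \frac{N}{P\tilde P}\bigl|\tilde P - P\bigr|$, so that it suffices to control the numerator and denominator discretization errors separately, given a lower bound on $\tilde P$.

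First I would establish the elementary quadrature estimate: for a differentiable $g$ on $[0,1]^D$ with $|\nabla g|_{\max}\le \Lambda'$, the midpoint/left Riemann sum with spacing $\Delta x$ differs from the integral by at most $O(D\Lambda' \Delta x)$ — each cell contributes an error controlled by the variation of $g$ across the cell, which is at most $\sqrt{D}\,\Lambda' \Delta x$ times the cell volume, and summing over all cells gives the bound. Applying this with $g = P(x)P(E|x)$ and with $g = x_k P(x)P(E|x)$, and using the product rule together with $P(x), P(E|x), x_k \le 1$ and $|\nabla P(E|x)|_{\max}\le\Lambda$, yields $|\tilde N - N| \le O(D\Lambda \Delta x)$ and $|\tilde P - P| \le O(D\Lambda \Delta x)$. (Here I am folding $|\nabla P(x)|$ into the constant or absorbing it, since the hypothesis only names $\Lambda$ for the likelihood; I'd state explicitly whatever regularity on $P(x)$ is needed, or note $P(x)=1/N$ uniform makes $\nabla P = 0$.) The constant $3D\Lambda$ appearing in the theorem statement is exactly what comes out of the product rule bound on $\nabla\bigl(x_k P(x)P(E|x)\bigr)$, so I would track factors carefully to land on that.

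Next I would impose $\Delta x$ small enough that $\tilde P \ge P - |\tilde P - P| \ge P/2$ — more precisely, $|\tilde P - P| \le \langle P(E|x),P(x)\rangle/2$, which is where the second hypothesis on $\epsilon$ (equivalently on $\Delta x$ through $\epsilon$) enters, guaranteeing the denominator stays bounded away from zero. With that in hand, the quotient-error bound becomes $\bigl|[\tilde x]_k - [x]_k\bigr| \le \frac{2}{\langle P(E|x),P(x)\rangle}\cdot O(D\Lambda\Delta x) + \frac{O(D\Lambda\Delta x)}{\langle P(E|x),P(x)\rangle}$ (using $N/P \le 1$ since $x_k\le1$), i.e. $O\!\bigl(\frac{D\Lambda\Delta x}{\langle P(E|x),P(x)\rangle^2}\bigr)$ roughly, and setting this $\le \epsilon$ and solving for $\Delta x$ reproduces the claimed threshold $\Delta x \le \min_E \frac{\epsilon \langle P(E|x),P(x)\rangle^2}{\langle P(E|x),P(x)\rangle^2 + 3D\Lambda}$, with the $\min_E$ ensuring it holds for whichever evidence is observed. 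Finally I would note that since $N = (\Delta x)^{-D}$, this choice of $\Delta x$ is achievable with a finite mesh, completing the argument.

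The main obstacle I anticipate is bookkeeping the constants so that they match the exact expressions $3D\Lambda$ and the particular algebraic form of the two displayed bounds — the Riemann-sum error constant, the product-rule terms, and the $\frac{2}{P}$ versus $\frac{1}{P}$ weights must all be tracked without slack, and one has to decide how the gradient of the prior $P(x)$ is handled (either assumed bounded, absorbed into $\Lambda$, or taken uniform). The quadrature estimate itself and the quotient perturbation are routine; making the final inequalities come out \emph{exactly} as stated, including the somewhat unusual denominator $\langle P(E|x),P(x)\rangle^2 + 3D\Lambda$, is the delicate part.
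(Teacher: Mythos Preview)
Your overall strategy---bound numerator and denominator discretization errors separately, then combine via a quotient perturbation inequality---matches the paper's architecture. The gap is in the discretization scheme itself. You propose a standard Riemann sum on $P(x)P(E|x)$ and $x_kP(x)P(E|x)$, which forces you to control $|\nabla(P(x)P(E|x))|$ and hence requires regularity on the prior $P(x)$ that the theorem does not assume. You flag this (``one has to decide how the gradient of the prior is handled'') but do not resolve it; adding a hypothesis on $\nabla P(x)$ would prove a weaker statement than the one claimed.

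The paper avoids this entirely by a different discretization: on each mesh cell $V_j$ with centroid $\bar x_j$ it keeps the \emph{exact} prior mass $\int_{V_j}P(x)\,\mathrm{d}^Dx$ and only replaces the likelihood by its centroid value $P(E|\bar x_j)$. The discrete denominator is thus $\sum_j P(E|\bar x_j)\int_{V_j}P(x)\,\mathrm{d}^Dx$, and the only Taylor estimate needed is $|P(E|x)-P(E|\bar x_j)|\le D\Lambda\Delta x$ on $V_j$, which uses solely the stated bound on $\nabla P(E|x)$. Consequently the constant $3D\Lambda$ does not come from a product-rule bound on $\nabla\bigl(x_kP(x)P(E|x)\bigr)$ as you conjecture; it arises as $D\Lambda+2D\Lambda$ from two applications of the likelihood Taylor bound (once on the numerator of the per-cell posterior probability, once on the denominator through $|1/(1+z)-1|\le 2|z|$). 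The additive $\langle P(E|x),P(x)\rangle^2$ in the final threshold comes from a separate bare $\Delta x$ contribution incurred when $x_k$ is replaced by the centroid coordinate $[\bar x_j]_k$, giving the total error $\Delta x\bigl(1+3D\Lambda/\langle P(E|x),P(x)\rangle^2\bigr)$.
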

\begin{proof}
We employ the following approximation scheme.  Let $V_j$ be a hypercube of volume $\Delta x^D$ with centroid $\bar{x}_j$.  We then approximate the prior distribution within the hypercube as $P(x) \approx  \delta (x-\bar{x}_j)\int_{V_j} P(x) \mathrm{d}^D x$.  Our goal is to bound the error that this approximation incurs in the posterior mean.

We first analyze the error in approximating the probability assigned to each hypercube $V_j$ after a Bayesian update
\begin{align}
&\left|\frac{\int_{V_j}P(E|x) P(x) \mathrm{d}^D x}{\int P(E|x) P(x) \mathrm{d}^D x} - \frac{P(E|\bar{x}_j)\int_{V_j} P(x) \mathrm{d}^D x}{\sum_jP(E|\bar{x}_j)\int_{V_j} P(x) \mathrm{d}^D x} \right|\nonumber\\
&\qquad\le \left|\frac{\int_{V_j}P(E|x) P(x) \mathrm{d}^D x}{\int P(E|x) P(x) \mathrm{d}^D x} - \frac{P(E|\bar{x}_j)\int_{V_j} P(x) \mathrm{d}^D x}{\int P(E|x) P(x) \mathrm{d}^D x} \right|+\left|\frac{P(E|\bar{x}_j)\int_{V_j} P(x) \mathrm{d}^D x}{\int P(E|x) P(x) \mathrm{d}^D x} - \frac{P(E|\bar{x}_j)\int_{V_j} P(x) \mathrm{d}^D x}{\sum_jP(E|\bar{x}_j)\int_{V_j} P(x) \mathrm{d}^D x} \right|.\label{eq:triangleeq}
\end{align}

From Taylor's remainder theorem and the triangle inequality, we then see that
\begin{align}
\int_{V_j} (P(E|x)-P(E|\bar{x}_j))P(x) \mathrm{d}^Dx\le D\Lambda \Delta x \int_{V_j} P(x)\mathrm{d}^Dx,
\end{align}
which implies that
\begin{equation}
 \left|\frac{\int_{V_j}P(E|x) P(x) \mathrm{d}^D x}{\int P(E|x) P(x) \mathrm{d}^D x} - \frac{P(E|\bar{x}_j)\int_{V_j} P(x) \mathrm{d}^D x}{\int P(E|x) P(x) \mathrm{d}^D x} \right| \le \frac{D\Lambda \Delta x\int_{V_j} P(x)\mathrm{d}^Dx}{\int P(E|x) P(x) \mathrm{d}^D x}.\label{eq:term1}
\end{equation}

Now looking at the remaining term in~\eq{triangleeq} we see that setting $$\sum_j \int_{V_j} \Delta P(E|x) P(x) \mathrm{d}^Dx := \sum_j P(E|\bar{x}_j) \int_{V_j} P(x) \mathrm{d}^Dx -\int P(E|x) P(x) \mathrm{d}x ,$$
Using this definition, we can upper bound 
\begin{align}
&\left|\frac{1}{\int P(E|x) P(x) \mathrm{d}^D x}- \frac{1}{\sum_j P(E|\bar{x}_j)\int_{V_j} P(x)\mathrm{d}^Dx}\right|\nonumber\\
&\qquad\le\frac{1}{\int P(E|x) P(x) \mathrm{d}^D x}\max \left| 1 - \frac{1}{1-\sum_j \Delta P(E|\bar{x}_j)\int_{V_{j}} P(x) \mathrm{d}^D x\big/\int P(E|x) P(x) \mathrm{d}^D x}\right|
\end{align}
For the moment, let us assume that $\Delta x$ is chosen such that
\begin{equation}
|\Delta P(E|x)|\le {D}\Lambda \Delta x \le \int P(E|x) P(x) \mathrm{d}^Dx/2.\label{eq:assumption}
\end{equation} 
We will see that this is a consequence of the bound on $\epsilon$ in the theorem statement.  Then, using the fact that for all $|z|\le 1/2$, $|1/(1+z) -1|\le 2|z|$
\begin{equation}
\frac{1}{\int P(E|x) P(x) \mathrm{d}^D x}\max \left| 1 - \frac{1}{1-\sum_j \int_{V_{j}}\Delta P(E|\bar{x}_j) P(x) \mathrm{d}^D x\big/\int P(E|x) P(x) \mathrm{d}^D x}\right|\le  \frac{2D\Lambda \Delta x}{\left(\int P(E|x) P(x) \mathrm{d}^D x\right)^2}.\label{eq:term2}
\end{equation}
This implies that
\begin{equation}
\left|\frac{P(E|\bar{x}_j)\int_{V_j} P(x) \mathrm{d}^D x}{\int P(E|x) P(x) \mathrm{d}^D x} - \frac{P(E|\bar{x}_j)\int_{V_j} P(x) \mathrm{d}^D x}{\sum_jP(E|\bar{x}_j)\int_{V_j} P(x) \mathrm{d}^D x} \right|\le  \frac{2D\Lambda \Delta xP(E|\bar{x}_j)\int_{V_j} P(x) \mathrm{d}^D x}{\left(\int P(E|x) P(x) \mathrm{d}^D x\right)^2}
\end{equation}
Thus from~\eq{term1},~\eq{term2} and~\eq{triangleeq}
\begin{equation}
\left|\frac{\int_{V_j}P(E|x) P(x) \mathrm{d}^D x}{\int P(E|x) P(x) \mathrm{d}^D x} - \frac{P(E|\bar{x}_j)\int_{V_j} P(x) \mathrm{d}^D x}{\sum_jP(E|\bar{x}_j)\int_{V_j} P(x) \mathrm{d}^D x} \right|
\le \frac{3D\Lambda\Delta x\int_{V_j} P(x) \mathrm{d}^D x}{\left(\int P(E|x) P(x) \mathrm{d}^D x\right)^2}.\label{eq:proberror}
\end{equation}

Now let $[x]_k$ be the $k$--th component of the vector $x$.  It then follows that the posterior mean of that component of the model vector obeys
\begin{align}
\left|\int P(x|E) x_k \mathrm{d}^D x- \sum_j  \frac{[\bar{x}_j]_k P(E|\bar{x}_j)\int_{V_j} P(x) \mathrm{d}^D x}{\sum_jP(E|\bar{x}_j)\int_{V_j} P(x) \mathrm{d}^D x}\right| &\le \left|\int P(x|E) x_k \mathrm{d}^D -\sum_j \int_{V_j} P(x|E)[\bar{x}_j]_k \mathrm{d}^D x \right|\nonumber\\
& + \left|\sum_j \int_{V_j} P(x|E)[\bar{x}_j]_k \mathrm{d}^D x- \sum_j  \frac{[\bar{x}_j]_k P(E|\bar{x}_j)\int_{V_j} P(x) \mathrm{d}^D x}{\sum_jP(E|\bar{x}_j)\int_{V_j} P(x) \mathrm{d}^D x} \right|.\label{eq:meantriangle}
\end{align}
Since $0\le [\bar{x}_j]_k \le 1$ and the sum of the prior probability is $1$,~\eq{proberror} implies
\begin{equation}
\left|\sum_j \int_{V_j} P(x|E)[\bar{x}_j]_k \mathrm{d}^D x- \sum_j  \frac{[\bar{x}_j]_k P(E|\bar{x}_j)\int_{V_j} P(x) \mathrm{d}^D x}{\sum_jP(E|\bar{x}_j)\int_{V_j} P(x) \mathrm{d}^D x} \right| \le  \frac{3D\Lambda\Delta x}{\left(\int P(E|x) P(x) \mathrm{d}^D x\right)^2}.\label{eq:meantriangle1}
\end{equation}
Similarly,
\begin{equation}
\left|\int P(x|E) x_k \mathrm{d}^D -\sum_j \int_{V_j} P(x|E)[\bar{x}_j]_k \mathrm{d}^D x \right|\le \sum_j \int_{V_j} P(x|E) \mathrm{d}^D x \Delta x = \Delta x.\label{eq:meantriangle2}
\end{equation}
Therefore \eq{meantriangle}, \eq{meantriangle1} and \eq{meantriangle2} imply that the error in the posterior mean is
\begin{equation}
\left|\int P(x|E) x_k \mathrm{d}^D x- \sum_j  \frac{[\bar{x}_j]_k P(E|\bar{x}_j)\int_{V_j} P(x) \mathrm{d}^D x}{\sum_jP(E|\bar{x}_j)\int_{V_j} P(x) \mathrm{d}^D x}\right| \le \Delta x\left(1+\frac{3D\Lambda}{\left(\int P(E|x) P(x) \mathrm{d}^D x\right)^2} \right).\label{eq:meanbound}
\end{equation}
Simple algebra then shows that the error in the approximate posterior mean is at most $\epsilon$ if
\begin{equation}
\Delta x \le \max_E \frac{\epsilon \langle P(E|x),P(x)\rangle^2}{\langle P(E|x),P(x)\rangle^2+3{D} \Lambda}.\label{eq:Deltaxbd}
\end{equation}

Eq.~\eq{assumption} is a key assumption behind~\eq{Deltaxbd}.  It is then easy to see from algebra that the assumption is implied by~\eq{Deltaxbd} if
\begin{equation}
\epsilon \le \frac{\langle P(E|x),P(x)\rangle^2+ 3D\Lambda}{2 D\Lambda\langle P(E|x),P(x)\rangle},
\end{equation}
as claimed.
\end{proof}

This theorem shows that if the derivatives of the likelihood function are large or the inner product between the prior and the likelihood function is small then the errors incurred by updating can be potentially large.  These errors can be combated by making $\Delta x$ small.  This is potentially expensive since $\Delta x = N^{-D}$ where $N$ is the number of points in the mesh approximating the posterior.

\begin{corollary}
Given the likelihood function satisfies the assumptions of~\thm{errbd}, the number of qubits needed to represent the prior distribution using a uniform mesh of $[0,1]^D$ to sufficient precision to guarantee that the error in the posterior mean after an update is at most $\epsilon$ is bounded above by
$$
D \left\lceil\log_2\left(\max_E \frac{\langle P(E|x),P(x)\rangle^2+3{D} \Lambda}{\epsilon \langle P(E|x),P(x)\rangle^2}
\right) \right\rceil.
$$
\end{corollary}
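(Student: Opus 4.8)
The plan is to simply count the number of mesh points dictated by \thm{errbd} and take the logarithm. Recall that a uniform mesh of $[0,1]^D$ with spacing $\Delta x$ in each direction contains $N = (\Delta x)^{-D}$ points, and representing an index into such a mesh requires $\lceil \log_2 N \rceil = D \lceil \log_2 (1/\Delta x) \rceil$ qubits (one block of $\lceil \log_2(1/\Delta x)\rceil$ qubits per coordinate direction, which is why the ceiling sits inside the factor of $D$ rather than outside).

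First I would invoke \thm{errbd}, which guarantees that the error in the posterior mean after one update is at most $\epsilon$ provided
\begin{equation}
\Delta x \le \min_E \frac{\epsilon \langle P(E|x),P(x)\rangle^2}{\langle P(E|x),P(x)\rangle^2+3D\Lambda},
\end{equation}
together with the stated side condition on $\epsilon$, which is assumed in the corollary's hypothesis. It therefore suffices to choose $\Delta x$ equal to this upper bound. Then $1/\Delta x = \max_E \big(\langle P(E|x),P(x)\rangle^2 + 3D\Lambda\big) / \big(\epsilon \langle P(E|x),P(x)\rangle^2\big)$, and the number of qubits per coordinate is $\lceil \log_2(1/\Delta x)\rceil$, giving the claimed expression after multiplying by $D$.

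There is really no obstacle here — the corollary is a bookkeeping restatement of \thm{errbd}. The only point that warrants a sentence of care is the placement of the ceiling: since each of the $D$ coordinates is discretized independently into $1/\Delta x$ values, one needs $\lceil \log_2(1/\Delta x)\rceil$ qubits for each, so the total is $D\lceil \log_2(1/\Delta x)\rceil$ and not $\lceil D\log_2(1/\Delta x)\rceil$; the former is the (slightly larger, hence safe) upper bound quoted. I would also remark that if one insists on a global index into all $N$ points the bound $\lceil \log_2 N\rceil = \lceil D \log_2(1/\Delta x)\rceil \le D\lceil \log_2(1/\Delta x)\rceil$ still holds, so the stated formula is valid either way.
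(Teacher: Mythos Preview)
Your proof is correct and follows the same approach as the paper's: substitute $\Delta x = N^{-1/D}$ into the bound from \thm{errbd} and solve for the number of qubits. Your discussion of why the ceiling sits inside the factor of $D$ is a useful elaboration beyond the paper's one-sentence proof, which simply states the substitution without further comment.
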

\begin{proof}
Proof is an immediate consequence of substituting $\Delta x =1/N^{1/D}$ into~\thm{errbd} and solving for $N$.
\end{proof}
This shows that the number of qubits needed is at most logarithmic in the error.  Furthermore, if $L$ updates are required then the total cost is increased by at most an additive factor of $\log(L)$.  We do not include this in our cost estimates since this error estimate is needlessly pessimistic as Bayesian inference is insensitive to the initial prior according to the Bernstein von-Mises theorem and consequently such errors are unlikely to be additive.

\section{Filtering Distributions for Time-Dependent Models}
\label{app:time-dep}

In practice a physical system whose properties we want to infer is seldom time
independent. Moreover, by demanding time-invariance, we preclude applications
to many interesting problem domains outside of physics, such as in financial
modeling and computer vision. In such cases, the likelihood function $P(E|x)$
is replaced by $P(E|x;\tau)$ where $\tau$ is the time in the experimental
system. Thus, the techniques developed do not directly apply to time-dependent
cases, but rather to the model that results from marginalizing over this time-
dependence.

There are several ways of dealing with problems involving a time-dependent
likelihood. The most natural way is by introducing new parameters, called
\emph{hyperparameters}, that allow the variation of the likelihood function to
be modeled. Estimation and inference then proceed on the hyperparameters,
rather than on the latent variables directly. For instance, letting $\omega$
in the periodic likelihood 
\begin{align}
P(1 | \omega; \omega_-, t) &= \cos^2 ((\omega-\omega_-)  t),\nonumber\\
P(0 | \omega; \omega_-, t) &= \sin^2 ((\omega-\omega_-)  t),\label{eq:simple-precession}
\end{align}
 be drawn from a stationary Gaussian
process and then marginalizing over the history of that process results in a
new hyperparameterized likelihood
\begin{equation}
    P(1 | \mu, \sigma; \omega_-, t) = \frac{1}{2} \left(e^{-\frac{1}{2} \sigma ^2 t^2} \cos (\mu  t)+1\right),
\end{equation}
where $\mu$ and $\sigma^2$ are the mean and variance of the Gaussian process.
Using hyperparameters works well for modeling the distribution of the
dynamics of a model and can be directly implemented using the previously
discussed methods, but it does little to help track the \emph{instantaneous}
latent variables of the system.

Tracking such variation in the latent variables can be 
challenging because as Bayesian inference proceeds the certainty in the
value of the latent variables tends to increase, but if these variables drift beyond the support of the posterior distribution
then Bayesian inference will never be able to recover. 
In other words, if the true model for a system drifts into a region that the is not supported by the a prior obtained by previous updates that neglected
the stochasticity of the latent variables then the algorithm will no longer be able to track the instantaneous value of the latent variable.


Fortunately, the SMC literature has already provided a solution to this problem.  Approximate inference algorithms can
be made to track stochastically varying latent variables, by incorporating a prediction step that diffuses the hidden variables
of each particle \cite{isard_condensationconditional_1998}. Here, we extend this technique to the our
quantum algorithm by performing Bayes updates on QFT-transformed posterior
states. This allows our algorithm to continue enjoying dramatic advantages
in space complexity even in the presence of time-dependence.

In particular, by convolving the prior with a filter function such as a
Gaussian, the width of the resultant distribution can be increased without
affecting the prior mean.  This means that the Bayes estimate of the true
model will remain identical while granting the prior the ability to recover
from time-variation of the latent variables.  In particular, if we assume
that at each step Bayesian inference causes the posterior variance to contract
by a factor of $\alpha$ and  convolution with a filter function causes the
variance to expand by $\beta$ then the viariance of the resulting distribution
asymptotes to $\beta/(1-\alpha)$. Thus we can combat $\sigma(x)$ from becoming
unrealistically small by applying such filtering strategies.

The convolution property of the Fourier transform gives for any two functions $P$ and $Q$
\begin{equation}
P \star Q \propto \mathcal{F}^{-1}\left(\mathcal{F}(P) \cdot \mathcal{F}(Q) \right),\label{eq:conv}
\end{equation}
where $\star$ is the circular convolution operation.
The quantum Fourier transform can therefore be used to convolve an unknown $P$ with a known distribution $Q$ that has an efficiently computable Fourier transform $\hat{Q}$.  This
convolution allows us to filter the prior distribution.

\begin{theorem}
Let  $O_{\hat{Q}}$ be a quantum oracle such that $O_{\hat{Q}}\ket{x} \ket{y} = \ket{x} \ket{y\oplus \sin^{-1}(\hat{Q} (x)/\Gamma_E)}$ where $\hat{Q}:=\mathcal{F}(Q)$ and $\hat{Q}(x)\le \Gamma_E$ and $Q(x) \in \mathbb{C}^{2^n}$.  Then given access to a unitary oracle $O_{\rm in}$ that prepares the state $\sum_x \sqrt{P(x)}\ket{x}$, the state $\sum_x \sqrt{(P\star Q) (x)}\ket{x}$ can be prepared with error $\epsilon$ in the $2$--norm using a number of queries that has an average-case query complexity of $O(\sqrt{\Gamma_E/ \langle \mathcal{F}(P), \mathcal{F}(Q)}\rangle)$.
\end{theorem}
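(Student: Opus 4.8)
The plan is to run the rejection--sampling construction of~\lem{rejection} conjugated by a quantum Fourier transform, so that the pointwise product appearing in the convolution theorem~\eq{conv} is carried out in the Fourier domain. First I would use $O_{\rm in}$ once to prepare $\ket{\psi}=\sum_x \sqrt{P(x)}\ket{x}$ and then apply the QFT to obtain the Fourier--domain encoding, which we write as $\sum_k \sqrt{\mathcal{F}(P)(k)}\ket{k}$. Next, a single query to $O_{\hat Q}$ writes the angle $\sin^{-1}(\hat Q(k)/\Gamma_E)$ into an ancilla register, and a controlled $R_y$ rotation then produces -- exactly as in~\eq{rejection} -- the heralded state $\sum_k \sqrt{\mathcal{F}(P)(k)}\ket{k}\ket{\hat Q(k)}\bigl(\sqrt{\hat Q(k)/\Gamma_E}\,\ket{1}+\sqrt{1-\hat Q(k)/\Gamma_E}\,\ket{0}\bigr)$. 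Measuring the last register and obtaining $1$ collapses the system (up to normalization) onto $\sum_k \sqrt{\mathcal{F}(P)(k)\,\mathcal{F}(Q)(k)}\,\ket{k}$; applying the inverse QFT and invoking~\eq{conv} then returns $\sum_x \sqrt{(P\star Q)(x)}\,\ket{x}$.

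The heralding probability is, by the computation of~\lem{rejection},
\begin{equation}
p_{\rm succ}=\sum_k \mathcal{F}(P)(k)\,\frac{\mathcal{F}(Q)(k)}{\Gamma_E}=\frac{\langle \mathcal{F}(P),\mathcal{F}(Q)\rangle}{\Gamma_E}.
\end{equation}
Since the procedure is heralded, amplitude amplification~\cite{BHM+02} boosts the success branch quadratically, and using the variant that does not assume advance knowledge of $p_{\rm succ}$ this costs an expected $O(1/\sqrt{p_{\rm succ}})=O\bigl(\sqrt{\Gamma_E/\langle\mathcal{F}(P),\mathcal{F}(Q)\rangle}\bigr)$ repetitions, with the exact target produced conditioned on success. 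Each repetition uses a constant number of queries to $O_{\rm in}$ and $O_{\hat Q}$ together with a QFT and its inverse, which contribute only $O(n^2)$ gates and no oracle calls, so the stated average--case query complexity follows.

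It then remains to control the $\epsilon$ error in the $2$--norm. The two contributions are the finite bit--precision with which $O_{\hat Q}$ supplies the rotation angle, which propagates linearly into the $\ket{1}$ amplitude, and -- if an approximate QFT is used -- its truncation error; both can be driven below $\epsilon$ with $O(\mathrm{polylog}(1/\epsilon))$ additional bits and gates, absorbed into the hidden constants exactly as in the proof of~\lem{rejection}.

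The step I expect to be the main obstacle is the interplay between amplitudes and probabilities under $\mathcal{F}$. One must first deal with the fact that $\hat Q=\mathcal{F}(Q)$ is in general complex, so the controlled operation has to reproduce both the magnitude \emph{and} the phase of $\hat Q(k)/\Gamma_E$ in the $\ket{1}$ branch (a controlled phase in addition to the $R_y$ of~\lem{rejection}); and one must then verify that the post--selected Fourier--domain amplitudes are the pointwise product $\mathcal{F}(P)(k)\mathcal{F}(Q)(k)$ up to a single global constant, so that~\eq{conv} and Parseval's theorem identify the output with $\sum_x \sqrt{(P\star Q)(x)}\ket{x}$ rather than a phase-- or sign--twisted relative of it. Pinning down that identification is the only genuinely delicate point; the remaining structure parallels~\lem{rejection} verbatim.
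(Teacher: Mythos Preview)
Your proposal follows the paper's proof essentially step for step: QFT the prior state, run the rejection-sampling construction of \lem{rejection} in the Fourier domain using $O_{\hat Q}$, inverse-QFT back, and invoke amplitude amplification to obtain the stated average query complexity $O(\sqrt{\Gamma_E/\langle\mathcal{F}(P),\mathcal{F}(Q)\rangle})$. The paper's argument is in fact terser than yours: it writes the post-QFT amplitudes generically as $\omega_k$ rather than $\sqrt{\mathcal{F}(P)(k)}$, appeals to \eq{conv} and Plancherel's theorem in a single sentence, and does not address the $\epsilon$ error or the complex-phase issue at all. The concern you single out at the end---that the QFT acts on amplitudes $\sqrt{P(x)}$ rather than on $P(x)$, so one must check that the post-selected Fourier-domain product really inverse-transforms to $\sqrt{P\star Q}$---is exactly the point the paper leaves implicit; your identification of it as the genuinely delicate step is accurate, and the paper does not supply any additional argument beyond what you already have.
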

\begin{proof}
Notice that Bayes updating the quantum SMC state consists of pointwise multiplication. As a result, applying \lem{rejection} in the Fourier domain, we can implement the convolution described above.  Doing so involves the following process
\begin{enumerate}
\item  Fourier transform the current posterior, $\ket{P}:=\sum_x P(x) \ket{x} \mapsto \hat{\mathcal{F}}\left(\sum_x P(x) \ket{x} \right):= \sum_k \omega_k \ket{k}$.
\item Prepare the Fourier-domain representation of the convolution kernel, \\$\sum_k \omega_k \ket{k}\mapsto \sum_k \omega_k \ket{k}\ket{\sin^{-1}(\sqrt{\hat{Q}(k)/\Gamma_E})}$.
\item Update by the convolution kernel and transform back, \\$\sum_k \omega_k \ket{k}\ket{\sin^{-1}(\sqrt{\hat{Q}(k)/\Gamma_E})}\mapsto \hat{\mathcal{F}}^{-1}\left(\sum_k \omega_k \ket{k}\ket{0}\left(\sqrt{\hat{Q}(k)/\Gamma_E}\ket{1}+\sqrt{1-\hat{Q}(k)/\Gamma_E}\ket{0} \right)\right)$.
\end{enumerate}
If $1$ is measured then the result will implement the circular convolution $P \star Q$ according to~\eq{conv} and Plancherel's theorem.

First, the query complexity of this algorithm is easy to estimate.  The initial state preparation requires a query to $O_{\rm in}$ and the calculation of $\hat{Q}(k)$ requires a query to $P_{\hat{Q}}$.  By using amplitude amplification on the $1$ result, we have that on average $O(\sqrt{\Gamma_E/ \langle \mathcal{F}(P), \mathcal{F}(Q)}\rangle)$ queries are required to prepare the state.
\end{proof}

\end{document}